\DeclareMathAlphabet{\mathpzc}{OT1}{pzc}{m}{it}
\DeclareMathOperator*{\argmax}{\arg\!\max}
\newtheorem{lemma}{Lemma}
\newtheorem{theorem}{\textbf{\textsc{Theorem}}}
\newtheorem{cor}{Corollary}
\begin{document}
	
\title{Optimal and Fast Real-time Resources Slicing with Deep Dueling Neural Networks}

\author{Nguyen Van Huynh, Dinh Thai Hoang, Diep N. Nguyen, and Eryk Dutkiewicz \\
\thanks{This work was supported in part by the National Research Foundation of Korea (NRF) grant funded by the Korean government (MSIP) (2014R1A5A1011478). This paper was presented in part at IEEE International Conference  on Communications (IEEE ICC'2014), Sydney, Australia~\cite{niyato20xx}.}
\thanks{D. T. Hoang, D. Niyato, and P. Wang are with Nanyang Technological University, Singapore. E-mails: dinh0007@e.ntu.edu.sg, dniyato@ntu.edu.sg, wangping@ntu.edu.sg.}
}

\author{Nguyen Van Huynh,~\IEEEmembership{Student Member,~IEEE,}
	Dinh Thai Hoang,~\IEEEmembership{Member,~IEEE,}
	Diep N. Nguyen,~\IEEEmembership{Member,~IEEE,} \\
	and~Eryk Dutkiewicz,~\IEEEmembership{Senior Member,~IEEE}
	\thanks{Nguyen Van Huynh, Dinh Thai Hoang, Diep N. Nguyen, and Eryk Dutkiewicz are with University of Technology Sydney, Australia. E-mails: huynh.nguyenvan@student.uts.edu.au, \{Hoang.Dinh, Diep.Nguyen, and Eryk.Dutkiewicz\}@uts.edu.au.} 
}

\maketitle
\begin{abstract}
Effective network slicing requires an infrastructure/network provider to deal with the uncertain demand and real-time dynamics of network resource requests. Another challenge is the combinatorial optimization of numerous resources, e.g., radio, computing, and storage. This article develops an optimal and fast real-time resource slicing framework that maximizes the long-term return of the network provider while taking into account the uncertainty of resource demand from tenants. Specifically, we first propose a novel system model which enables the network provider to effectively slice various types of resources to different classes of users under separate virtual slices. We then capture the real-time arrival of slice requests by a semi-Markov decision process. To obtain the optimal resource allocation policy under the dynamics of slicing requests, e.g., uncertain service time and resource demands, a Q-learning algorithm is often adopted in the literature. However, such an algorithm is notorious for its slow convergence, especially for problems with large state/action spaces. This makes Q-learning practically inapplicable to our case in which multiple resources are simultaneously optimized. To tackle it, we propose a novel network slicing approach with an advanced deep learning architecture, called deep dueling that attains the optimal average reward much faster than the conventional Q-learning algorithm. This property is especially desirable to cope with real-time resource requests and the dynamic demands of users. Extensive simulations show that the proposed framework yields up to 40\% higher long-term average return while being few thousand times faster, compared with state of the art network slicing approaches.
\end{abstract}

\begin{IEEEkeywords}
Network slicing, MDP, Q-Learning, deep reinforcement learning, deep dueling, and resource allocation.
\end{IEEEkeywords}

\section{Introduction}

The latest Cisco Visual Networking Index forecast a seven-fold increase in global mobile data traffic from 2016 to 2021, with 5G traffic expected to start having a relatively small but measurable impact on mobile growth starting in 2020. Machine-to-machine (M2M) connections will represent 29\% (3.3 billion) of total mobile connections - up from 5\% (780 million) in 2016~\cite{disruptive_asia_news}. M2M has been the fastest growing mobile connection type as global IoT applications continue to gain traction in consumer and business environments. However, legacy mobile networks are mostly designed to provide services for mobile broadband users and are unable to meet adjustable parameters like priority and quality of service (QoS) for emerging services. Therefore, mobile operators may find difficulties in getting deeply into these emerging vertical services with different service requirements for network design and development. In order to enhance operators' products for vertical enterprises and provide service customization for emerging massive connections, as well as to give more control to enterprises and mobile virtual network operators, the concept of network slicing has been recently introduced to allow the independent usage of a part of network resources by a group of mobile terminals with special requirements. 

Network slicing was introduced by Next Generation Mobile Networks Alliance~\cite{whitepaper}, and it has quickly received a lot of attention from both academia and industry. In general, network slicing is a novel virtualization paradigm that enables multiple logical networks, i.e., slices, to be created according to specific technical or commercial demands and simultaneously run on top of the physical network infrastructure. The core idea of the network slicing is using software-defined networking (SDN) and network functions virtualization (NFV) technologies for virtualizing the physical infrastructure and controlling network operations. In particular, SDN provides a separation between the network control and data planes, improving the flexibility of network function management and efficiency of data transfer. Meanwhile, NFV allows various network functions to be virtualized, i.e., in virtual machines. As a result, the functions can be moved to different locations, and the corresponding virtual machines can be migrated to run on commoditized hardware dynamically depending on the demand and requirements~\cite{Towards5G},~\cite{Zhang2017Network}.

The key benefit of network slicing is to enable providers to offer network services on an as-a-service basis which enhances operational efficiency while reducing time-to-market for new services~\cite{ZhouNetwork2016}. However, to achieve this goal, there are two main challenges in managing network resources. First, many emerging services require not only radio resources for communications but also computing and storage resources to meet requirements about quality-of-service (QoS) of users. For example, IoT services often require simultaneously radio, computing, and storage resources to transmit data and pre-processing intensive tasks because of the resource constraints on IoT devices. Additionally, the network provider may posses multiple data centers with several servers containing diverse resources, e.g., computing and storage, connected together~\cite{Kurtz2018Network}. Thus, how to concurrently manage multiple interconnected resources is an emerging challenge for the network provider. Second, due to the dynamic demand of services, e.g., the frequency of requests and occupation time, and the limitation of resources, how to dynamically allocate resources in a real-time manner to maximize the long term revenue is another challenge of the network provider. As a result, there are some recent works proposing solutions to address these issues.

\subsection{Related Work}

A number of research works have been introduced recently to address the network slicing resource allocation problem for the network provider~\cite{Jiang2016Network}-\cite{ToshibaPatent}. In particular, the authors in~\cite{Jiang2016Network} and~\cite{Soliman2016QoS} developed a two-tier admission control and resource allocation model to answer two fundamental questions, i.e., whether a slice request is accepted and how much radio resource is allocated to the accepted slice. To address this problem, the authors in~\cite{Jiang2016Network} used an extensive searching method to achieve the globally optimal resource allocation solution for the network provider. However, this searching method cannot be applied to complex systems with a large number of resources. To address this problem, a heuristic scheme with three main steps was introduced in~\cite{Soliman2016QoS} to effectively allocate resources to the users. Yet this heuristic scheme cannot guarantee to achieve the optimal solution for the network provider. In addition, both network slicing resource allocation solutions proposed in~\cite{Jiang2016Network} and~\cite{Soliman2016QoS} are heuristic methods with only radio resource taken into consideration. Thus, these solutions may not be appropriate to implement in dynamic network slicing resource allocation systems with a wide range of resource demands and services.

To deal with the dynamic of services, e.g., users' resource demands and their occupation time, the authors in~\cite{Sciancalepore} proposed a model to predict the future demand of slices, thereby maximizing the system resource utilization for the provider. The key idea of this approach is to use the Holt-Winters approach~\cite{Holt_Winters_theory} to predict network slices' demands through tracking the traffic usage of users in the past. However, the accuracy of this prediction depends largely on the heavy-tailed distribution functions along with many control parameters such as scale factor, least-action trip planning, and potential gain. Furthermore, this approach only considers the short-term reward for the provider, and thus the long-term profit may not be able to obtain. Therefore, the authors in~\cite{Bega2017Optimising},~\cite{Aijaz2017Hap}, and~\cite{ToshibaPatent} proposed reinforcement learning algorithms to address these problems. Among dynamic resource allocation methods, reinforcement learning has been considering to be the most effective way to maximize the long-term reward for dynamic systems as this method allows the network controller to adjust its actions in a real-time manner to obtain the optimal policy through the trial-and-error learning process~\cite{Sutton1998Reinforcement}. However, this method often takes a long period to converge to the optimal solution, especially for a large-scale system.

In all aforementioned work, the authors considered optimizing only radio resources, while other resources are completely ignored. However, as stated in~\cite{whitepaper,introduce,FoukasNetworkslicing2017}, a typical network slice is composed of three main components, i.e., radio, computing, and storage. Consequently, considering only radio resources when orchestrating slices may be not able to achieve the optimal solution. Specifically, in Fig.~\ref{Fig.compare}, we show an illustration to demonstrate the inefficiency of optimizing only radio resources. Therefore, in this paper, we introduce a Semi-Markov decision processes (SMDP) framework~\cite{Puterman_1994_Book} which allows the network provider effectively allocate all three resources, i.e., radio, computing, and storage, to the users in a real-time manner. However, when we jointly consider combinatorial resources, i.e., radio, storage, and computing resources, together with the uncertainty of demands, the optimization problem becomes very complex as we need to simultaneously deal with a very large state space with multi-dimension and real-time dynamic decisions. Thus, we propose a novel network slicing framework with an advanced deep learning architecture using two streams of fully connected hidden layers, i.e., deep dueling neural network~\cite{Wang2015Dueling}, combined with Q-learning method to effectively address this problem. Our preliminary simulation results~\cite{ICC} show that our proposed approach not only can effectively deal with the dynamic of the system but also significantly improve the system performance compared with all other current network slicing resource allocation approaches. It is worth noting that the VNF placement, routing, and connectivity resource allocation problems have been well investigated in the literature. Hence, in this paper, we focus on dealing with the uncertainty, dynamics, and heterogeneity of slice requests. Note that, our system model can be straightforwardly extended to the case with diverse connectivity among servers and data centers by accommodating additional states to the system state space. Note that, our proposed framework can handle very well a large state space (one of our key contributions in this paper).

\subsection{Main Contributions}

The main contributions of this paper are as follows:

\begin{itemize}
	\item We develop a dynamic network resource management model based on semi-Markov decision process framework which allows the network provider to jointly allocate computing, storage, and radio resources to different slice requests in a real-time manner and maximize the long-term reward under a number of available resources.
	
	\item To find the optimal policy under the uncertainty of slice service demands, we deploy the Q-learning algorithm which can achieve the optimal solution through reinforcement learning processes. However, the Q-learning algorithm may not be able to effectively achieve the optimal policy due to the curse-of-dimensionality problem when we jointly optimize multiple resources concurrently. Thus, we develop a deep double Q-learning approach which utilizes the advantage of a neural network to train the learning process of the Q-learning algorithm, thereby attaining much better performance than that of the Q-learning algorithm. 

	\item To further enhance the performance of the system, we propose the novel network slicing approach with the deep dueling neural network architecture~\cite{Wang2015Dueling}, which can outperform all other current reinforcement learning techniques in managing network slicing. The key idea of the deep dueling is using two streams of fully connected hidden layers to concurrently train the learning process of the Q-learning algorithm, thereby improving the training process and achieving an outstanding performance for the system.

	\item Finally, we perform extensive simulations with the aim of not only demonstrating the efficiency of proposed solutions in comparison with other conventional methods but also providing insightful analytical results for the implementation of the system. Importantly, through simulation results, we demonstrate that our proposed framework can improve the performance of the system up to 40\% compared with other current approaches.
\end{itemize} 

The rest of the paper is organized as follows. Section~\ref{sec:sysmodel} and Section~\ref{sec:problem_formulation} describe the system model and the problem formulation, respectively. Section~\ref{sec:Qlearning} introduces the Q-learning algorithm. Further on, we present the deep double Q-learning and deep dueling algorithms in Section~\ref{sec:DeepQlearning}. Evaluation results are then discussed in Section~\ref{sec:evaluation}. Finally, conclusions and future works are given in Section~\ref{sec:conclusion}.

\section{System Model}
\label{sec:sysmodel}

\begin{figure*}[!]
	\centering
	\includegraphics[scale=0.15]{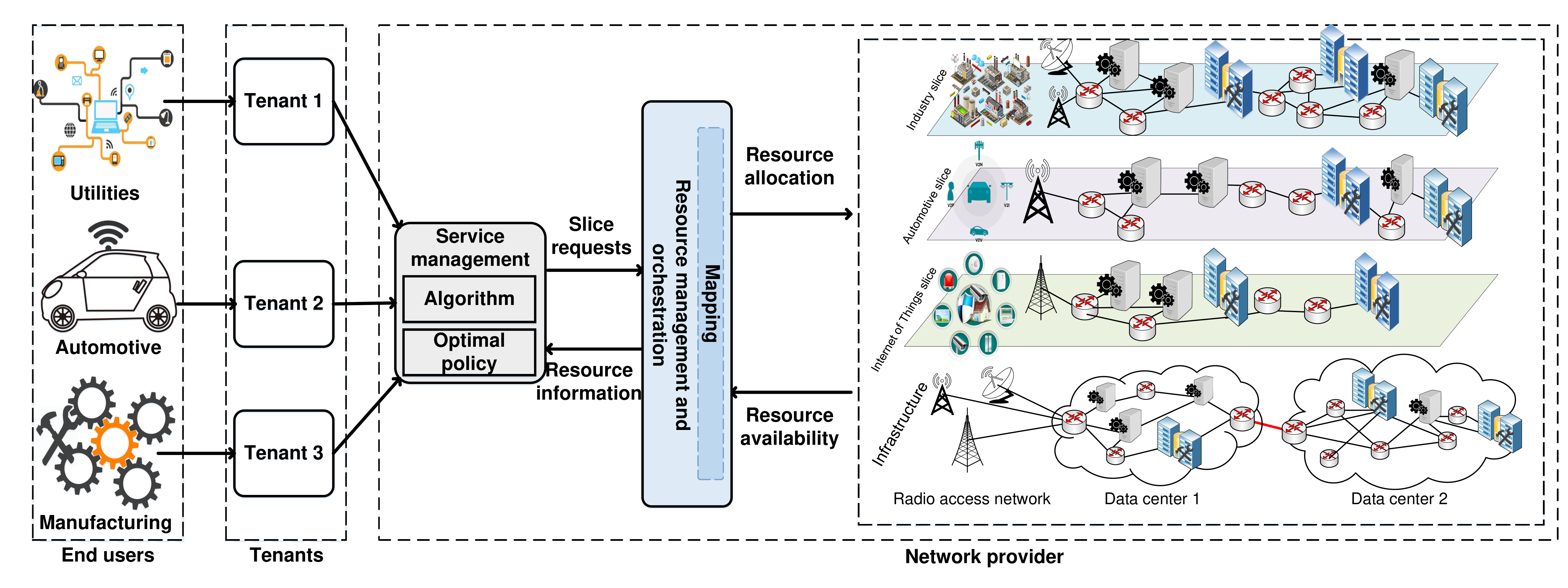}
	\caption{Network resource slicing system model.}
	\label{fig:system_model}
\end{figure*}

In Fig.~\ref{fig:system_model}, we consider a general network slicing model with three major parties~\cite{Sciancalepore},~\cite{Bega2017Optimising},~\cite{FoukasNetworkslicing2017}: 
\begin{itemize}
	\item \emph{Network provider:} is the owner of the network infrastructure who provides resource slices including radio, computing, and storage, to the tenants. 
	\item \emph{Tenants:} request and lease resource slices to meet service demands of their subscribers.
	\item \emph{End users:} run their applications on the slices of the above subscribed tenants.
\end{itemize}

We consider three tenants corresponding to three popular classes of services, i.e., utilities, automotive, and manufacturing, as shown in Fig.~\ref{fig:system_model}. Each class of service possesses some specific features regarding its functional, behavioral perspective, and requirements. For example, a vehicle may need an ultra-reliable slice for telemetry assisted driving~\cite{introduce}. For slices requested from industry, security, resilience, and reliability of services are of higher priority~\cite{Bihannic2018Network},~\cite{5GHuawei}. Thus, when a tenant sends a network slice request to the network provider, the tenant will specify resources requested and additional service requirements, e.g., security and reliability (defined in the slice blueprint). As a result, tenants may pay different prices for their requests, depending on their service demands. Upon receiving a slice request, the service management component (in Fig.~\ref{fig:system_model}) analyzes the requirements and makes a decision to accept or reject the request based on its optimal policy.

The service management block consists of two components: (i) the optimal policy and (ii) the algorithm. When a slice request arrives at the system, the optimal policy component will make a decision, i.e., accept or reject, based on the (current) optimal policy obtained by the algorithm component. As the decision can be made immediately, the decision latency is virtually zero. For the algorithm component, the optimal policy is calculated and updated periodically. It is worth noting that our algorithm observes the results after performing the decision, and uses the observations together with the characteristics of slice requests for its training process. By doing so, our algorithm can learn from previous experience and is able to deal with the uncertainty of slice requests. If the request is accepted, the service management will transfer the slice request to the resource management and orchestration (RMO) block to allocate resources. Once a slice request is accepted, the network provider will receive an immediate reward (the amount of money) paid by the tenant for granted resources and services.

In practice, the network provider may possess multiple data centers for network slicing services. Each data center contains a set of servers with diverse resources, e.g., computing and storage, which are used to support VNFs services. Servers in the data center are connected together, and the data centers are connected via backhaul links. Then, the network slicing and resource allocation processes to each slice are taken place as follows.

\begin{itemize}
	\item A slice request is associated with the \textit{network slice blueprint} (i.e., a template) that describes the structure, configuration, and workflow for instantiating and controlling the network slice instance for the service during its life cycle~\cite{FoukasNetworkslicing2017},~\cite{NFV1}-\cite{NFV3}. The service/slice instance includes a set of network functions and resources to meet the end-to-end service requirements.
	\item When a slice request arrives at the system, the orchestrator will interpret the blueprint~\cite{NFV1}-\cite{NFV3}. In particular, all information of the infrastructure such as (i) NFV services provided by servers, (ii) resources availability at servers, and (iii) the connectivities among servers are checked.
	\item Based on the aforementioned information, the orchestrator will find the optimal servers and links to place VNFs to meet the required end-to-end services of the slice (i.e., VNF placement procedure).
	\item During the life cycle of the slice, the orchestrator can change the allocated computing and storage resources by using the scaling-in and scaling-out mechanisms. In addition, the connectivity among VNFs and their locations can be changed when there is no sufficient resources or the behavior of the slice is changed, e.g., update, migrate, or terminate the network slice.
\end{itemize}

Note that the VNF placement, routing, and connectivity resource allocation problems have been well investigated in the literature, e.g.,~\cite{Fischer2013Virtual}-\cite{Awerbuch1993Throughput}. For example, in~\cite{Awerbuch1993Throughput}, the AAP algorithm is introduced to admit and route connection requests by finding possible paths satisfying cost criteria. Instead of focusing on VNF placement, routing, and connectivity resource allocation problems, in this paper, we mainly focus on dealing with the uncertainty, dynamics, and heterogeneity of slice requests. Thus, we consider a simplified yet practical model and propose the novel framework using the deep dueling neural network architecture~\cite{Wang2015Dueling} to address the aforementioned problems, which are the main aims and key contributions of this work. Specifically, we assume that there are $C$ classes of slices, denoted by $\mathcal{C}= \{1,\ldots, c,\ldots, C\}$. Each slice from class $c$ requires $r^{re}_c$, $\omega^{re}_c$, and $\delta^{re}_c$ units of radio, computing, and storage resources, respectively. If a slice request from class $c$ is accepted, the provider will receive an immediate reward $r_c$. The maximum radio, computing, and storage resources of the network provider are denoted by $\Theta$, $\Omega$, and $\Delta$ units, respectively. Let $n_c$ denote the number of slices from class $c$ being simultaneously run/served in the system. At any time, the following resource constraints guarantee that the allocated resources do not exceed the available resources of the infrastructure:
\begin{equation}
\label{eq:contraints}
\Theta \geq \sum_{c=1}^{C} r_c^{re}n_c, \quad \Omega \geq \sum_{c=1}^{C} \omega_c^{re}n_c, \text{ and } \Delta \geq \sum_{c=1}^{C} \delta_c^{re}n_c.
\end{equation}

\section{Problem Formulation}
\label{sec:problem_formulation}

To maximize the long-term return for the provider while accounting for the real-time arrivals of slice requests, we recruit the semi-Markov decision process (SMDP)~\cite{Puterman_1994_Book}. An SMDP is defined by a tuple $<t_i, \mathcal{S}, \mathcal{A}, \mathcal{L}, r>$ where $t_i$ is an decision epoch, $\mathcal{S}$ is the system's state space, $\mathcal{A}$ is the action space, $\mathcal{L}$ captures the state transition probabilities and the state sojourn time, and $r$ is the reward function. Unlike discrete Markov decision processes where decisions are made in every time slots, in an SMDP, we only need to make decisions when an event occurs. This makes the SMDP framework more effective to capture real-time network slicing systems.

\subsection{Decision Epoch}

Under our network slicing system model, the provider needs to make a decision upon receiving requests from tenants. Thus, the decision epoch can be defined as the inter-arrival time between two successive slice requests.

\subsection{State Space}

The system state $\mathbf{s}$ of the SMDP at the current decision epoch captures the number of slices $n_c$ from a given class $c$ ($\forall c\in\mathcal{C}$) being simultaneously run/served in the system. Formally, we define $\mathbf{s}$ as an $1\times C$ vector:
\begin{equation}
\mathbf{s} \triangleq [n_1,\ldots,n_c,\ldots n_C].
\end{equation}
Given the network provider's resource constraints in \eqref{eq:contraints}, the state space $\mathcal{S}$ of all possible states $\mathbf{s}$ is defined as:
\begin{equation}\label{eq:state_space}
\begin{split}
\mathcal{S} \triangleq 	\Bigg\{  \mathbf{s} = [n_1,\ldots,n_c,\ldots n_C] : \Theta \geq \sum_{c=1}^{C} r_c^{re}n_c; \\
\Omega \geq \sum_{c=1}^{C} \omega_c^{re}n_c; \Delta \geq \sum_{c=1}^{C} \delta_c^{re}n_c\Bigg\}.
\end{split}
\end{equation} 

At the current system state $\mathbf{s}$, we define the \emph{event vector} $\mathbf{e} \triangleq [e_1,\ldots, e_c, \ldots, e_C]$ with $e_c\in\{1,-1,0\},\forall c\in \mathcal{C}$. $e_c$ equals to ``$1$'' if a new slice request from class $c$ arrives, $e_c$ equals to ``$-1$'' if a slice's resources are being released (also referred to as a slice completion/departing) to the system's resource, and $e_c$ equals ``$0$'' otherwise (i.e., no slice request arrives nor completes/departs from the system). The set $\mathcal{E}$ of all the possible events is then defined as follows:
\begin{equation}
\mathcal{E} \triangleq \Big\{ \mathbf{e}:e_c \in \{-1, 0, 1\}; \sum_{c=1}^{C}|e_c| \leq 1\Big\},
\end{equation}
where the \emph{trivial event} $\mathbf{e^*}\triangleq (0, \ldots, 0)\in \mathcal{E}$ means no request arrival or completion/departing from all $C$ classes.


\subsection{Action Space} 
At state $\mathbf{s}$, if a slice request arrives (i.e., there exists $c\in \mathcal{C}$ such that $e_c=1$), the network provider can choose either to accept or reject this request to maximize its long-term return. Let $a_{\mathbf{s}}$ denote the action to be taken at state $\mathbf{s}$ where $a_{\mathbf{s}} = 1$ if an arrival slice is accepted and $a_{\mathbf{s}} = 0$ otherwise. The state-dependent action space $\mathcal{A}_{\mathbf{s}}$ can be defined by:
\begin{equation}
\mathcal{A}_{\mathbf{s}} \triangleq \{ a_{\mathbf{s}}\}= \{0, 1\}.
\end{equation}

\subsection{State Transition Probability}

As aforementioned, in this work, we propose reinforcement learning approaches which can obtain the optimal policy for the network provider without requiring information from the environment (to cope with the uncertain demands and dynamics of slice requests). However, to lay a theoretical foundation and to evaluate the performance of our proposed solutions, we first assume that the arrival process of slice requests from class $c$ follows the Poisson distribution with mean rate $\lambda_c$ and its network resource occupation time follows the exponential distribution with mean $1/\mu_c$. The assumptions allow us to analyze the dynamics of the SMDP, which is characterized by the state transition probabilities of the underlying Markov chain. In particular, our SMDP model consists of a renewal process and a continuous-time Markov chain $\{X(t: t \geq 0)\}$ in which the sojourn time in a state is a continuous random variable. We then can adopt the uniformization technique~\cite{Gallager1995Discrete} to determine the probabilities for events and derive the transition probabilities $\mathcal{L}$. As shown in Fig.~\ref{fig:uniformization}, the uniformization technique transforms the original continuous-time Markov chain $\{X(t): t \geq 0\}$ into an equivalent stochastic process $\{\overline{X}(t), t \geq 0\}$ in which the transition epochs are generated by a Poisson process $\{N(t): t \geq 0\}$ at a uniform rate and the state transitions are governed by the discrete-time Markov chain $\{\overline{X}_n\}$~\cite{Tijms2003Stochastic},~\cite{KallenbergMDP}. The details of the uniformization technique are as the following.
\captionsetup[figure]{font={color=black}}
\begin{figure}[!]
	\centering
	\includegraphics[scale=0.25]{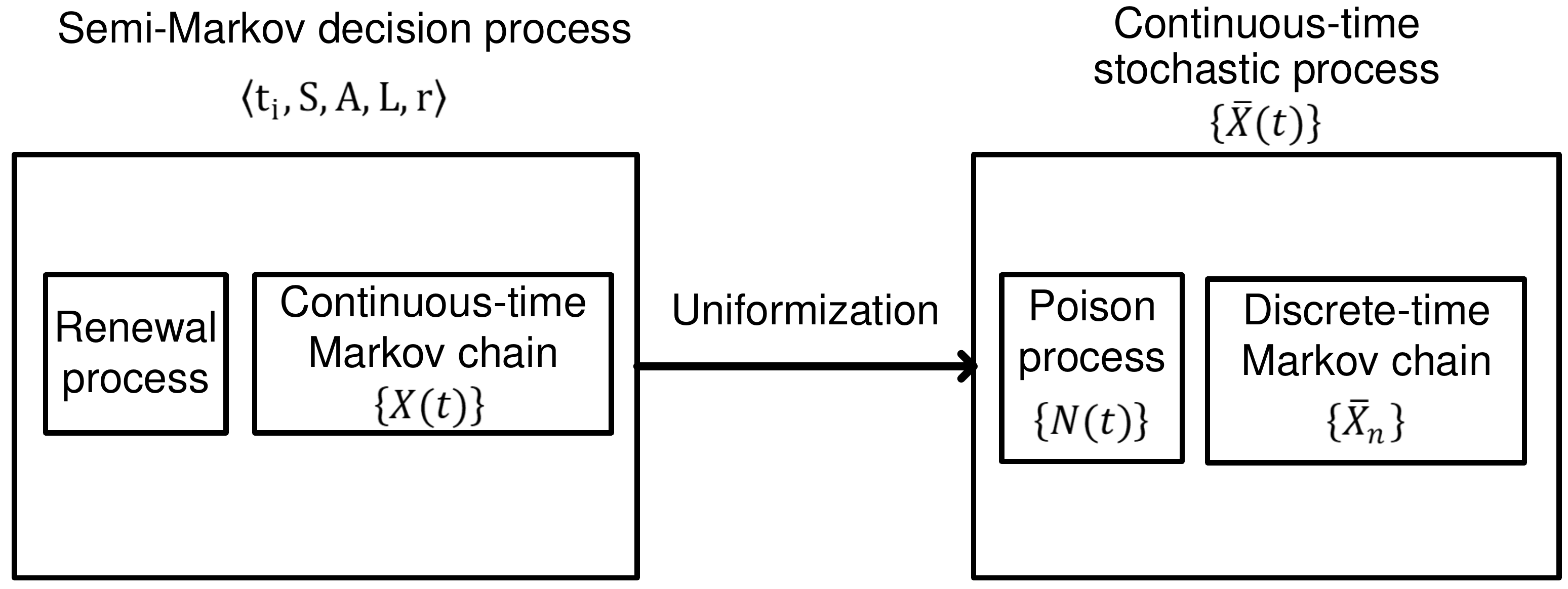}
	\caption{Uniformization technique.}
	\label{fig:uniformization}
\end{figure}

Our Markov chain $\{X(t)\}$ can be considered as a time-homogeneous Markov chain. Suppose that $\{X(t)\}$ is in state $\mathbf{s}$ at the current time $t$. If the system leaves state $\mathbf{s}$, it transfers to state $\mathbf{s'}$ ($\neq \mathbf{s}$) with probability $p_{\mathbf{s}, \mathbf{s'}}(t)$. The probability that the process will leave state $\mathbf{s}$ in the next $\Delta t$ to state $\mathbf{s'}$ is expressed as follows:
\begin{equation}
\begin{aligned}
P\{X(t + \Delta t) & =\mathbf{s'}| X(t) =\mathbf{s}\}   \\
& = \left\{	\begin{array}{ll}
z_{\mathbf{s}}\Delta t \times p_{\mathbf{s}, \mathbf{s'}}(t) + o(\Delta t),	& \mathbf{s'} \neq \mathbf{s},	\\
1- z_{\mathbf{s}}\Delta t + o(\Delta t),	&	 \mathbf{s'} = \mathbf{s},
\end{array}	\right.
\end{aligned}
\end{equation}
as $\Delta t \rightarrow 0$ and $z_{\mathbf{s}}$ is the occurrence rate of the next event expressed as follows:
\begin{equation}
z_{\mathbf{s}} = \sum_{c=1}^{C}(\lambda_c+n_c\mu_c).
\end{equation}
In the uniformization technique, we consider that the occurrence rate $z_{\mathbf{s}}$ of the states are identical, i.e., $z_{\mathbf{s}}= z$ for all $\mathbf{s}$. Thus, the transition epochs can be generated by a Poisson process with rate $z$. To formulate the uniformization technique, we choose a number $z$ with
\begin{equation}
z= \max_{\mathbf{s} \in \mathcal{S}} z_{\mathbf{s}}.
\end{equation}

Now, we define a discrete-time Markov chain $\{\overline{X}_n\}$ whose one-step transition probabilities $\overline{p}_{\mathbf{s},\mathbf{s'}}(t)$ are given by:
\begin{equation}
\label{eq:overlinep}
\overline{p}_{\mathbf{s},\mathbf{s'}}(t) 	=	\left\{	\begin{array}{ll}
(z_{\mathbf{s}}/z)p_{\mathbf{s},\mathbf{s'}}(t),	& \mathbf{s'} \neq \mathbf{s},	\\
1-z_{\mathbf{s}}/z,	&	\mbox{otherwise},
\end{array}	\right.
\end{equation}
for all $\mathbf{s} \in \mathcal{S}$. Let $\{N(t), t \geq 0\} $ be a Poisson process with rate $z$ such that the process is independent of the discrete-time Markov chain $\{\overline{X}_n\}$. We then define the continuous-time stochastic process $\{\overline{X}(t), t \geq 0\}$ as follows:
\begin{equation}
\label{convert}
\overline{X}(t) = \overline{X}_{N(t)}, t \geq 0.
\end{equation}
Equation~(\ref{convert}) represents that the process $\{\overline{X}(t)\}$ makes state transitions at epochs generated by a Poisson process with rate $z$ and the state transitions are governed by the discrete-time Markov chain $\{\overline{X}_n\}$ with one-step transition probabilities $\overline{p}_{\mathbf{s},\mathbf{s'}}(t)$ in~(\ref{eq:overlinep}). When the Markov chain $\{\overline{X}_n\}$ is in state $\mathbf{s}$, the system leaves to next state with probability $z_{\mathbf{s}}/z$ and is a self-transition with probability $1-z_{\mathbf{s}}/z$. In fact, the transitions out of state $\mathbf{s}$ are delayed by a time factor of $z/z_{\mathbf{s}}$, while a factor of $z_{\mathbf{s}}/z$ corresponds to the time until a state transition from state $\mathbf{s}$. In addition, in our system model there is no terminal state, i.e., the discrete-time Markov chain describing the state transitions in the transformed process has to allow for self-transitions leaving the state of the process unchanged. Therefore, the continuous $\{\overline{X}(t)\}$ is probabilistically identical to the original continuous-time Markov chain $\{X(t)\}$. This statement can be expressed as the following equation:
\begin{equation}
\begin{aligned}
& P\{\overline{X}(t+\Delta t)=\mathbf{s'}|\overline{X}(t)=\mathbf{s}\} = z \Delta t \times \overline{p}_{\mathbf{s},\mathbf{s'}} + o(\Delta t)\\
&=z_{\mathbf{s}} \Delta t \times p_{\mathbf{s},\mathbf{s'}} + o(\Delta t)\\
&=q_{\mathbf{s},\mathbf{s'}}\Delta t + o(\Delta t)\\
&=P\{X(t+\Delta t) = \mathbf{s'}| X(t) = \mathbf{s}\} \mbox{ for } \Delta t \rightarrow 0, \forall \mathbf{s},\mathbf{s'} \in \mathcal{S} \\
& \mbox{ and } \mathbf{s} \neq \mathbf{s'},
\end{aligned}
\end{equation}
where $q_{\mathbf{s},\mathbf{s'}}$ is the infinitesimal transition rate of the continuous-time Markov chain $\{X(t)\}$ and is expressed as follows:
\begin{equation}
q_{\mathbf{s},\mathbf{s'}} = z_{\mathbf{s}}p_{\mathbf{s},\mathbf{s'}}, \forall \mathbf{s}, \mathbf{s'} \in \mathcal{S} \quad \mbox{and} \quad \mathbf{s'} \neq \mathbf{s}.
\end{equation}
Clearly, in our system, the occurrence rate of the next event $z_{\mathbf{s}} = \sum_{c=1}^{C}(\lambda_c+n_c\mu_c)$ are positive and bounded in $\mathbf{s} \in \mathcal{S}$. Thus, it is proved that the infinitesimal transition rates determine a unique continuous-time Markov chain $\{X(t)\}$~\cite{Tijms2003Stochastic}. We then make a necessary corollary as follows:
\begin{cor}
	\label{cor}
	The probabilities $p_{\mathbf{s},\mathbf{s'}}(t)$ are given by:
	\begin{equation}
	\label{complex1}
	p_{\mathbf{s},\mathbf{s'}}(t) = \sum_{n=0}^{\infty}e^{-zt}\frac{{zt}^n}{n!} \overline{p}_{\mathbf{s},\mathbf{s'}}^{(n)}, \forall \mathbf{s}, \mathbf{s'} \in \mathcal{S} \mbox{ and } t \ge 0,
	\end{equation}
	where the probabilities $\overline{p}_{\mathbf{s},\mathbf{s'}}^{(n)}$ can be recursively computed from
	\begin{equation}
	\label{complex2}
	\overline{p}_{\mathbf{s},\mathbf{s'}}^{(n)} = \sum_{k \in \mathcal{S}}^{}\overline{p}_{\mathbf{s},k}^{(n-1)}\overline{p}_{k,\mathbf{s'}}, n = 1,2,\ldots
	\end{equation}
	starting with $\overline{p}_{\mathbf{s},\mathbf{s}}^{(0)}=1$ and $\overline{p}_{\mathbf{s}, \mathbf{s'}}^{0}=0$ $\forall \mathbf{s'} \neq \mathbf{s}$ .
\end{cor}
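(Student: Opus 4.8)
The plan is to exploit the representation $\overline{X}(t) = \overline{X}_{N(t)}$ established in~\eqref{convert}, together with the fact, already argued in the excerpt, that $\{\overline{X}(t)\}$ is probabilistically identical to the original chain $\{X(t)\}$. Because the two processes share the same transition probabilities, it suffices to compute $p_{\mathbf{s},\mathbf{s'}}(t) = P\{\overline{X}(t) = \mathbf{s'} \mid \overline{X}(0) = \mathbf{s}\}$ through the uniformized representation, where all the ingredients (the one-step probabilities $\overline{p}_{\mathbf{s},\mathbf{s'}}$ in~\eqref{eq:overlinep} and the rate-$z$ Poisson clock) are explicit.

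First I would condition on the number of Poisson events up to time $t$. Since $\{N(t)\}$ is a Poisson process with rate $z$ that is \emph{independent} of the discrete-time chain $\{\overline{X}_n\}$, conditioning on $\{N(t) = n\}$ decouples the two sources of randomness. On this event exactly $n$ one-step transitions of $\{\overline{X}_n\}$ have been applied by time $t$, so the position of the process is $\overline{X}_n$ started from $\overline{X}_0 = \mathbf{s}$; hence $P\{\overline{X}(t) = \mathbf{s'} \mid \overline{X}(0) = \mathbf{s}, N(t) = n\} = \overline{p}_{\mathbf{s},\mathbf{s'}}^{(n)}$, the $n$-step transition probability of the embedded chain. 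The Poisson law supplies $P\{N(t) = n\} = e^{-zt}(zt)^n/n!$, and applying the law of total probability by summing over $n \ge 0$ delivers~\eqref{complex1}. The interchange of the conditional expectation with the infinite sum is legitimate because every term is nonnegative and dominated by the Poisson weights, which sum to one.

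For the recursion~\eqref{complex2} I would invoke the Chapman--Kolmogorov equations for the time-homogeneous discrete chain $\{\overline{X}_n\}$: an $n$-step transition factors through the state $k$ occupied after $n-1$ steps, giving $\overline{p}_{\mathbf{s},\mathbf{s'}}^{(n)} = \sum_{k \in \mathcal{S}} \overline{p}_{\mathbf{s},k}^{(n-1)} \overline{p}_{k,\mathbf{s'}}$. The base case $\overline{p}_{\mathbf{s},\mathbf{s}}^{(0)} = 1$ and $\overline{p}_{\mathbf{s},\mathbf{s'}}^{(0)} = 0$ for $\mathbf{s'} \neq \mathbf{s}$ simply encodes that zero steps leave the state unchanged, i.e., the identity matrix at step zero, which seeds the recursion correctly.

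The main obstacle is the conceptual identification at the heart of the conditioning step: recognizing that the event $\{N(t) = n\}$ forces \emph{exactly} $n$ transitions of the embedded chain, so that the continuous-time position $\overline{X}(t)$ coincides with $\overline{X}_n$ and inherits its $n$-step law. Once this is in place, the remainder is a routine total-probability decomposition over the Poisson counts combined with the standard multi-step transition recursion, and no delicate convergence or measurability issues arise beyond the dominated-sum justification noted above.
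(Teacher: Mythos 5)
Your argument is correct, but it reaches \eqref{complex1} by a genuinely different route from the paper. The paper's derivation (carried out in Appendix~\ref{appendix:pro_equivalent}) is analytic: it writes the Kolmogoroff forward equations $\mathbf{P}'(t)=\mathbf{P}(t)\mathbf{Q}$, solves them as the matrix exponential $\mathbf{P}(t)=e^{t\mathbf{Q}}$, substitutes $\mathbf{Q}=z(\overline{\mathbf{P}}-\mathbf{I})$ to obtain $\mathbf{P}(t)=e^{-zt}e^{zt\overline{\mathbf{P}}}=\sum_{n\ge 0}e^{-zt}\frac{(zt)^n}{n!}\overline{\mathbf{P}}^n$, and reads off \eqref{complex1} entrywise; the Poisson-conditioning computation you propose appears there only afterwards, as the other half of the proof of Theorem~\ref{theorem_equivalent}. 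You instead compute $P\{\overline{X}(t)=\mathbf{s'}\mid\overline{X}(0)=\mathbf{s}\}$ directly by conditioning on $N(t)=n$ and then transfer the result to $p_{\mathbf{s},\mathbf{s'}}(t)$ via the claim that $\{\overline{X}(t)\}$ and $\{X(t)\}$ are probabilistically identical. That transfer is exactly the content of Theorem~\ref{theorem_equivalent}, which the paper proves \emph{using} Corollary~\ref{cor}; so to avoid circularity you must justify the identification independently, namely by the fact that the two processes share the same bounded infinitesimal transition rates $q_{\mathbf{s},\mathbf{s'}}$ and that such rates determine a unique continuous-time Markov chain (the uniqueness result the paper cites from Tijms). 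With that made explicit your proof is sound and arguably more transparent probabilistically, while the paper's ODE route has the advantage of establishing \eqref{complex1} for the original chain without presupposing the equivalence, which it then obtains as a theorem. The treatment of \eqref{complex2} via Chapman--Kolmogorov and the identity matrix at step zero is the same in both approaches.
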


In the next theorem, we prove that two processes $\{\overline{X}(t)\}$ and $\{X(t)\}$ are probabilistically equivalent.
\begin{theorem}
	\label{theorem_equivalent}
	 $\{\overline{X}(t)\}$ and $\{X(t)\}$ are probabilistically equivalent as
	\begin{equation}
	p_{\mathbf{s},\mathbf{s'}}(t) = P\{\overline{X}(t)=\mathbf{s'}|X(0)=\mathbf{s}\}, \forall \mathbf{s},\mathbf{s'} \in \mathcal{S} \mbox{ and } t \le 0.
	\end{equation}
\end{theorem}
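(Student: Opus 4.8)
The plan is to prove the identity by computing the transition law of the subordinated process $\{\overline{X}(t)\}$ directly from its definition and then recognizing the resulting expression as the series already supplied by Corollary~\ref{cor}. (I read the range as $t\ge 0$; the ``$t\le 0$'' in the statement is evidently a typo, mirroring the same slip in~\eqref{complex1}.) First I would start from the defining relation $\overline{X}(t)=\overline{X}_{N(t)}$ in~\eqref{convert}, in which $\{N(t)\}$ is a rate-$z$ Poisson process independent of the embedded chain $\{\overline{X}_n\}$. Conditioning on the number of uniformized transitions in $[0,t]$ and applying the law of total probability over the disjoint events $\{N(t)=n\}$, together with that independence, gives
\begin{equation}
P\{\overline{X}(t)=\mathbf{s'}\,|\,\overline{X}(0)=\mathbf{s}\} = \sum_{n=0}^{\infty} P\{N(t)=n\}\, P\{\overline{X}_n=\mathbf{s'}\,|\,\overline{X}_0=\mathbf{s}\}.
\end{equation}

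Next I would substitute the Poisson mass $P\{N(t)=n\}=e^{-zt}(zt)^n/n!$ and identify $P\{\overline{X}_n=\mathbf{s'}\,|\,\overline{X}_0=\mathbf{s}\}$ with the $n$-step probability $\overline{p}^{(n)}_{\mathbf{s},\mathbf{s'}}$ of $\{\overline{X}_n\}$, which satisfies the Chapman--Kolmogorov recursion~\eqref{complex2} under the stated initialization. The right-hand side then becomes exactly the series in~\eqref{complex1}, so by Corollary~\ref{cor} it equals $p_{\mathbf{s},\mathbf{s'}}(t)$. Since both chains are launched from the same state, $\overline{X}(0)=\mathbf{s}$ is the same initial condition as $X(0)=\mathbf{s}$, and the claimed equality follows. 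I would finish by noting that, because $\{X(t)\}$ and $\{\overline{X}(t)\}$ are both time-homogeneous Markov chains started in a common state, agreement of their transition functions forces agreement of all finite-dimensional distributions, which is the precise content of ``probabilistically equivalent.''

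Crucially, this argument computes the transition function of $\{\overline{X}(t)\}$ from scratch rather than assuming it, so there is no circularity with Corollary~\ref{cor}: the corollary contributes only the evaluation of the series as $p_{\mathbf{s},\mathbf{s'}}(t)$. The remaining steps are pure bookkeeping, and the only analytic points---the exchange of the countable sum with the conditioning and the convergence of the series---are painless, because $\mathcal{S}$ is finite under the resource constraints~\eqref{eq:state_space}, every $\overline{p}^{(n)}_{\mathbf{s},\mathbf{s'}}\in[0,1]$, and the Poisson weights sum to one, so dominated convergence applies at once. The substantive prerequisite, namely that a single uniform clock can drive all transitions, rests on the finiteness of $z=\max_{\mathbf{s}}z_{\mathbf{s}}$, which was already established when $z_{\mathbf{s}}=\sum_{c}(\lambda_c+n_c\mu_c)$ was shown to be positive and bounded on $\mathcal{S}$. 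Consequently the main obstacle is organizational rather than technical: getting the direction of the reduction right and aligning the two initial conditions, after which Corollary~\ref{cor} closes the argument.
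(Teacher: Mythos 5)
Your proposal is correct and follows essentially the same route as the paper's Appendix~A proof: the decisive step in both is conditioning on the Poisson count $N(t)$ and using independence to write $P\{\overline{X}(t)=\mathbf{s'}\mid\overline{X}(0)=\mathbf{s}\}=\sum_{n\ge 0}e^{-zt}\frac{(zt)^n}{n!}\overline{p}^{(n)}_{\mathbf{s},\mathbf{s'}}$, which is then matched to $p_{\mathbf{s},\mathbf{s'}}(t)$ via the series in~(\ref{complex1}). The only difference is organizational: the paper re-derives that series inside the proof from Kolmogorov's forward equations and the identity $\overline{\mathbf{P}}=\mathbf{Q}/z+\mathbf{I}$ (i.e., $\mathbf{P}(t)=e^{-zt}e^{zt\overline{\mathbf{P}}}$) before recalling Corollary~\ref{cor}, whereas you cite the corollary directly; you are also right that ``$t\le 0$'' is a typo for $t\ge 0$.
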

The proof of Theorem~\ref{theorem_equivalent} is given in Appendix~\ref{appendix:pro_equivalent}.$\hfill\blacksquare$

From~(\ref{complex1}) and~(\ref{complex2}), the computational complexity of uniformization method is derived as $O(vt|\mathcal{S}|^2)$, where $|\mathcal{S}|$ is the number of states of the system. Based on $z$ and $z_{\mathbf{s}}$, we can determine the probabilities for events as follows. The probability for an arrival slice from class $c$ occurring in the next event $\mathbf{e}$ equals $\lambda_c/z$. The probability for a departure slice from class $c$ occurring in the next event $\mathbf{e}$ equals $n_c\mu_c/z$, and the probability for a trivial event occurring in the next event $\mathbf{e}$ is $1-z_{\mathbf{s}}/z$. Hence, we can derive the transition probability $\mathcal{L}$.

\subsection{Reward Function}

The immediate reward after action $a_{\mathbf{s}}$ is executed at state $\mathbf{s} \in \mathcal{S}$ is defined as follows:
\begin{equation}
r(\mathbf{s},a_{\mathbf{s}})	=	\left\{	\begin{array}{ll}
r_c,	&	\mbox{if $e_c = 1$, $a_{\mathbf{s}}=1$, and $\mathbf{s'} \in \mathcal{S}$},	\\
0,	&	\mbox{otherwise}.
\end{array}	\right.
\end{equation}
At state $\mathbf{s}$, if an arrival slice is accepted, i.e., $a_{\mathbf{s}}=1$, the system will move to next state $\mathbf{s'}$ and the network provider receives an immediate reward $r_c$. In contrast, the immediate reward is equal to $0$ if an arrival slice is rejected or there is no slice request arriving at the system. The value of $r_c$ represents the amount of money paid by the tenant based on resources and additional services required.

As our system's statistical properties are time-invariant, i.e., stationary, the decision policy $\pi$ of the SMDP model, which is a pure strategy, i.e., accept or reject an arrival request, can be defined as a time-invariant mapping from the state space to the action space: $\mathcal{S} \rightarrow \mathcal{A}_{\mathbf{s}}$. Thus, the long-term average reward starting from a state $\mathbf{s}$ can be formulated as follows:
\begin{equation}
\label{eq:average_reward}
\mathcal{R}_{\pi}(\mathbf{s}) = \lim\limits_{K \rightarrow \infty} \frac{\mathbb{E}\{\sum_{k=0}^{K} r({\mathbf{s}}_k, \pi(\mathbf{s}_k)) | \mathbf{s}_0 = \mathbf{s}\}}{\mathbb{E}\{\sum_{k=0}^{K}\tau_k | \mathbf{s}_0 = \mathbf{s}\}}, \forall \mathbf{s} \in \mathcal{S},
\end{equation}
where $\tau_k$ is the time interval between the $k$-th and $(k+1)$-th decision epoch, $r$ is the immediate reward of the system, and $\pi(\mathbf{s})$ is the action corresponding to the policy $\pi$ at state $\mathbf{s}$. 

In the following theorem, we will prove that the limit in Equation~(\ref{eq:average_reward}) exists. 
\begin{theorem}
	\label{theo:reward}
	Given the state space $\mathcal{S}$ is countable and there is a finite number of decision epochs within a certain considered finite time, we have:
	\begin{equation}
	\begin{aligned}
	\label{eq:reward_theorem}
	\mathcal{R}_{\pi}(\mathbf{s}) &= \lim\limits_{K \rightarrow \infty} \frac{\mathbb{E}\{\sum_{k=0}^{K} r({\mathbf{s}}_k, \pi(\mathbf{s}_k)) | \mathbf{s}_0 = \mathbf{s}\}}{\mathbb{E}\{\sum_{k=0}^{K}\tau_k | \mathbf{s}_0 = \mathbf{s}\}}\\
	&= \frac{\overline{\mathcal{L}}_\pi r(\mathbf{s},\pi(\mathbf{s}))}{\overline{\mathcal{L}}_\pi y(\mathbf{s},\pi(\mathbf{s}))}, \forall \mathbf{s} \in \mathcal{S},
	\end{aligned}
	\end{equation}
	where $y(\mathbf{s},\pi(\mathbf{s}))$ is the expected time interval between adjacent decision epochs when action $\pi(\mathbf{s})$ is taken under state $\mathbf{s}$, and
	\begin{equation}
	\overline{\mathcal{L}}_\pi = \lim\limits_{K \rightarrow \infty} \frac{1}{K}\sum_{k=0}^{K-1} \mathcal{L}_\pi^k,
	\end{equation}
	where $\mathcal{L}_\pi^k$ and $\overline{\mathcal{L}}_\pi$ are the transition probability matrix and the limiting matrix of the embedded Markov chain for policy $\pi$, respectively.
\end{theorem}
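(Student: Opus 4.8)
The plan is to reduce the existence question to the standard Ces\`aro convergence of the $k$-step transition matrices $\mathcal{L}_\pi^k$ of the embedded Markov chain, and then to read off the closed form by taking a ratio of limits. First I would rewrite both the numerator and the denominator of \eqref{eq:average_reward} as matrix--vector products. Regarding the per-epoch reward $r(\cdot,\pi(\cdot))$ and the expected sojourn time $y(\cdot,\pi(\cdot))$ as column vectors indexed by $\mathcal{S}$, the Markov property together with the tower rule of conditional expectation gives $\mathbb{E}\{r(\mathbf{s}_k,\pi(\mathbf{s}_k)) \mid \mathbf{s}_0=\mathbf{s}\} = \mathcal{L}_\pi^k r(\mathbf{s},\pi(\mathbf{s}))$ and $\mathbb{E}\{\tau_k \mid \mathbf{s}_0=\mathbf{s}\} = \mathcal{L}_\pi^k y(\mathbf{s},\pi(\mathbf{s}))$, where $y(\mathbf{s},\pi(\mathbf{s}))$ is finite because the event rate $z_{\mathbf{s}}=\sum_{c=1}^{C}(\lambda_c+n_c\mu_c)$ is positive and bounded. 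Since each sum runs over the finitely many epochs $k=0,\ldots,K$, interchanging the summation with the expectation is immediate, so the numerator equals $\sum_{k=0}^{K}\mathcal{L}_\pi^k r(\mathbf{s},\pi(\mathbf{s}))$ and the denominator equals $\sum_{k=0}^{K}\mathcal{L}_\pi^k y(\mathbf{s},\pi(\mathbf{s}))$.

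Next I would normalize both sums by $K+1$ to expose Ces\`aro averages of the transition matrix, turning the ratio in \eqref{eq:average_reward} into the quotient of $\big(\frac{1}{K+1}\sum_{k=0}^{K}\mathcal{L}_\pi^k\big)r(\mathbf{s},\pi(\mathbf{s}))$ over $\big(\frac{1}{K+1}\sum_{k=0}^{K}\mathcal{L}_\pi^k\big)y(\mathbf{s},\pi(\mathbf{s}))$. Under the stated hypotheses, namely a countable $\mathcal{S}$ and only finitely many decision epochs in any finite horizon, the embedded chain is well defined and the Ces\`aro limit $\overline{\mathcal{L}}_\pi = \lim_{K\to\infty}\frac{1}{K}\sum_{k=0}^{K-1}\mathcal{L}_\pi^k$ exists entrywise. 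This is precisely the limiting-matrix (ergodic) theorem for Markov chains, and crucially it holds even when $\mathcal{L}_\pi$ is periodic, in which case the plain limit $\lim_k\mathcal{L}_\pi^k$ need not exist. Consequently the normalized numerator converges to $\overline{\mathcal{L}}_\pi r(\mathbf{s},\pi(\mathbf{s}))$ and the normalized denominator to $\overline{\mathcal{L}}_\pi y(\mathbf{s},\pi(\mathbf{s}))$.

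Finally I would take the quotient of these two limits, which is legitimate because the denominator limit is strictly positive: each sojourn time satisfies $y(\mathbf{s},\pi(\mathbf{s}))\geq 1/z>0$ and every row of $\overline{\mathcal{L}}_\pi$ is a probability vector, so $\overline{\mathcal{L}}_\pi y(\mathbf{s},\pi(\mathbf{s}))>0$. Since the normalized numerator tends to $\overline{\mathcal{L}}_\pi r(\mathbf{s},\pi(\mathbf{s}))$ and the normalized denominator tends to the strictly positive limit $\overline{\mathcal{L}}_\pi y(\mathbf{s},\pi(\mathbf{s}))$, the algebra of limits simultaneously establishes the \emph{existence} of $\mathcal{R}_\pi(\mathbf{s})$ and the identity \eqref{eq:reward_theorem}.

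I expect the main obstacle to be justifying the existence of the Ces\`aro limit $\overline{\mathcal{L}}_\pi$ on a countable rather than finite state space, since for countable chains one must verify positive recurrence to guarantee that the limiting matrix has finite, non-degenerate entries. Here this difficulty is in fact resolved by the resource constraints \eqref{eq:contraints}, which bound every $n_c$ and hence render $\mathcal{S}$ finite; on a finite chain the Ces\`aro limit always exists, so the theorem follows. A secondary subtlety is avoiding an indeterminate form when passing to the limit of the ratio, which is handled by first confirming the strict positivity of the denominator before applying the quotient rule, rather than passing to the limit in numerator and denominator independently.
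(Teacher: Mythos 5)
Your proof follows essentially the same route as the paper's: establish the existence of the Ces\`aro limiting matrix $\overline{\mathcal{L}}_\pi$ (the paper isolates this as its Lemma~1), rewrite the normalized expected sums as $\big(\tfrac{1}{K+1}\sum_{k}\mathcal{L}_\pi^k\big)$ applied to the reward and sojourn-time vectors, and pass to the ratio of limits. You are in fact more careful than the paper at the two delicate points --- the strict positivity of the denominator (via $y(\mathbf{s},\pi(\mathbf{s}))\geq 1/z>0$) and the observation that the resource constraints make $\mathcal{S}$ finite so the Ces\`aro limit exists --- both of which the paper asserts with little or no justification.
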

\begin{proof}
	We first have the following lemma.
	\begin{lemma}
		\label{Lem:exist}
		Given the the transition probability matrix $\mathcal{L}_\pi$, the limiting matrix $\overline{\mathcal{L}}_\pi$ exists.
	\end{lemma}
	The proof of Lemma~\ref{Lem:exist} is given in Appendix~\ref{app:reward}. 
	
	Since $\mathcal{L}_\pi$ is the transition probability matrix, $\overline{\mathcal{L}}_\pi$ exists as stated in Lemma~\ref{Lem:exist}. As the total of probabilities that the system transform from state $\mathbf{s}$ to other states is equal to $1$, we have:
	\begin{equation}
	\label{eq:sum1}
	\sum_{\mathbf{s'}\in \mathcal{S}}^{}\overline{\mathcal{L}}_\pi(\mathbf{s'}|\mathbf{s}) = 1.
	\end{equation}
	From~(\ref{eq:sum1}), we derive $\mathcal{L}_\pi^n$ and $\overline{\mathcal{L}}_\pi$ as follows:
	\begin{equation}
	\begin{aligned}
	\overline{\mathcal{L}}_\pi r(\mathbf{s}, \pi(\mathbf{s})) = \lim\limits_{K \rightarrow \infty} \frac{1}{K+1}\mathbb{E} \{\sum_{k=0}^{K}r(\mathbf{s}_k, \pi(\mathbf{s}_k))\}, \forall \mathbf{s} \in \mathcal{S}, \\
	\overline{\mathcal{L}}_\pi y(\mathbf{s}, \pi(\mathbf{s})) = \lim\limits_{N \rightarrow \infty} \frac{1}{N+1}\mathbb{E}\{\sum_{n=0}^{N}\tau_n\}, \forall \mathbf{s} \in \mathcal{S}.
	\end{aligned}
	\end{equation}
	Therefore, (\ref{eq:reward_theorem}) is obtained by taking ratios of these two quantities. Note that the limit of the ratios equals to the ratio of the limits, and that, when taking the limit of the ratios, the factor $\frac{1}{K+1}$ can be removed from the numerator and denominator.
\end{proof}

Note that in our SMDP model, the embedded Markov chain is unichain including a single recurrent class and a set of transient states for all pure policies $\pi$~\cite{Puterman_1994_Book}. Hence, the average reward $\mathcal{R}_{\pi}(\mathbf{s})$ is independent to the initial state, i.e., $\mathcal{R}_{\pi}(\mathbf{s})=\mathcal{R}_{\pi}, \forall \mathbf{s} \in \mathcal{S}$. The average reward maximization problem is then written as:
\begin{eqnarray}
\label{eq:optimization}
\max_{\pi}	& &	\mathcal{R}_{\pi} = \frac{\overline{\mathcal{L}}_\pi r(\mathbf{s},\pi(\mathbf{s}))}{\overline{\mathcal{L}}_\pi y(\mathbf{s},\pi(\mathbf{s}))}	\\
\mbox{s.t.}			& & \sum_{\mathbf{s'}\in \mathcal{S}}^{}\overline{\mathcal{L}}_\pi(\mathbf{s'}|\mathbf{s}) = 1, \forall \mathbf{s} \in \mathcal{S} \nonumber.
\end{eqnarray}
Our objective is to find the optimal admission policy that maximizes the average reward of the network provider, i.e.,
\begin{equation}
\pi^* = \argmax_\pi \mathcal{R}_{\pi}.
\end{equation}

As aforementioned, the network resources may come from multiple data centers with diverse connectivity among servers and data centers. In such a case, the above formulation can be straightforwardly extended by accommodating additional states to the system state space. Specifically, one can define the system state space that includes (i) services of requests together with their corresponding resources and order, i.e., the network slice blueprint, (ii) available resources and services at the servers, and (iii) connectivity among servers and data centers. This means that we just need to increase the state space, compared with the current state space (with three types of resources, as an example) in the current formulation, to capture additional resources and options. Then, the proposed admission/rejection framework can be implemented at the orchestrator to allocate the available resources to requested slices. Specifically, based on this state space, when a slice request arrives, the orchestrator is able to check whether to allocate an optimal possible link to the request (using existing network slicing mechanisms) and then makes a decision to accept or reject the request. In addition, after making a decision to allocate the resources for a slice request (i.e., after the initial deployment of VNFs), if the running slice requires to add more resources or remove some resources (i.e., scaling out or scaling in, respectively), we can consider some new events (i.e., requests to add or remove resources from running slices) to the system state space. Again, this implies that we only need to add more states to the system state space of the current model. Note that, the action space will be kept the same, i.e., only two actions (accept or reject), and we just need to set new rewards for accepting/rejecting requests from running slices in the problem formulation.

Note that the problem~(\ref{eq:optimization}) requires environment information, i.e., arrival and completion rates of slice requests, to construct the transition probability matrix $\mathcal{L}$. Nevertheless, due to the uncertain demands and the dynamics of slice requests from tenants, these environment parameters may not be available and can be time-varying. To cope with the demand uncertainty, we consider deep double Q-learning and deep dueling algorithms to find the optimal admission policy at the RMO to maximize the long-term average reward.

\section{Q-Learning for Dynamic Resource Allocation Under Uncertainty of Slice Service Demands}
\label{sec:Qlearning}


Q-learning \cite{Watkins1992QLearning} is a reinforcement learning technique (to learn environment parameters while they are not available) to find the optimal policy. In particular, the algorithm implements a Q-table to store the value for each pair of state and action. Given the current state, the network provider will make an action based on its current policy. After that, the Q-learning algorithm observes the results, i.e., reward and next state, and updates the value of the Q-table accordingly. In this way, the Q-learning algorithm will be able to learn from its decisions and adjust its policy to converge to the optimal policy after a finite number of iterations~\cite{Watkins1992QLearning}. In this paper, we aim to find the optimal policy $\pi^*:\mathcal{S} \rightarrow \mathcal{A}_{\mathbf{s}}$ for the network provider to maximize its long-term average reward. Specifically, let denote $\mathcal{V}^\pi(\mathbf{s}): \mathcal{S} \rightarrow \mathbb{R}$ as the expected value function obtained by policy $\pi$ from each state $\mathbf{s} \in \mathcal{S}$:
\begin{equation}
\begin{aligned}
\mathcal{V}^\pi(\mathbf{s}) &= \mathbb{E}_\pi \Big [ \sum_{t=0}^{\infty} \gamma^t r_t(\mathbf{s}_t, a_{\mathbf{s}_t})|\mathbf{s}_0=\mathbf{s}\Big ] \\
&=\mathbb{E}_\pi\Big [ r_t(\mathbf{s}_t, a_{\mathbf{s}_t}) + \gamma\mathcal{V}^\pi(\mathbf{s}_{t+1})|\mathbf{s}_0=\mathbf{s}\Big ],
\end{aligned}
\end{equation}
where $0\leq \gamma < 1$ is the discount factor which determines the importance of long-term reward~\cite{Watkins1992QLearning}. In particular, if $\gamma$ is close to 0, the RMO will prefer to select actions to maximize its short-term reward. In contrast, if $\gamma$ is close to 1, the RMO will make actions such that its long-term reward is maximized. $r_t(\mathbf{s}_t, a_{\mathbf{s}_t})$ is the immediate reward achieved by taking action $a_{\mathbf{s}_t}$ at state $\mathbf{s}_t$. Given a state $\mathbf{s}$, policy $\pi(\mathbf{s})$ is obtained by taking action $a_{\mathbf{s}}$ whose the value function is highest~\cite{Watkins1992QLearning}. Since we aim to find optimal policy $\pi^*$, an optimal action at each state can be found through the optimal value function expressed by:
\begin{equation}
\mathcal{V}^*(\mathbf{s}) = \max_{a_{\mathbf{s}}} \Big \{ \mathbb{E}_\pi[r_t(\mathbf{s}_t, a_{\mathbf{s}_t})+ \gamma\mathcal{V}^\pi(\mathbf{s}_{t+1})] \Big\}, \forall \mathbf{s} \in \mathcal{S}.
\end{equation}
The optimal Q-functions for state-action pairs are denoted by:
\begin{equation}
\mathcal{Q}^*(\mathbf{s},a_{\mathbf{s}}) \triangleq r_t(\mathbf{s}_t, a_{\mathbf{s}_t}) + \gamma\mathbb{E}_\pi[\mathcal{V}^\pi(\mathbf{s}_{t+1})], \forall \mathbf{s} \in \mathcal{S}.
\end{equation}
Then, the optimal value function can be written as follows:
\begin{equation}
\mathcal{V}^*(\mathbf{s}) = \max_{a_{\mathbf{s}}} \{ \mathcal{Q}^*(\mathbf{s},a_{\mathbf{s}})\}.
\end{equation}

By making samples iteratively, the problem is reduced to determining the optimal value of Q-function, i.e., $\mathcal{Q}^*(\mathbf{s},a_{\mathbf{s}})$, for all state-action pairs. In particular, the Q-function is updated according to the following rule:
\begin{equation}
\label{Eq:qfunction}
\begin{aligned}
\mathcal{Q}_{t}&(\mathbf{s}_t,a_{\mathbf{s}_t}) = \mathcal{Q}_t(\mathbf{s}_t,a_{\mathbf{s}_t}) \\
&+ \alpha_t\Big [ r_t(\mathbf{s}_t, a_{\mathbf{s}_t}) + \gamma\max_{a_{\mathbf{s}_{t+1}}} \mathcal{Q}_t(\mathbf{s}_{t+1}, a_{\mathbf{s}_{t+1}})- \mathcal{Q}_t(\mathbf{s}_t,a_{\mathbf{s}_t})\Big ].
\end{aligned}
\end{equation}
The key idea behind this rule is to find the temporal difference between the predicted Q-value, i.e., $r_t(\mathbf{s}_t, a_{\mathbf{s}_t}) + \gamma\max_{a_{\mathbf{s}_{t+1}}} \mathcal{Q}_t(\mathbf{s}_{t+1}, a_{\mathbf{s}_{t+1}})$ and its current value, i.e., $\mathcal{Q}_t(\mathbf{s}_t,a_{\mathbf{s}_t})$. In~(\ref{Eq:qfunction}), the learning rate $\alpha_t$ determines the impact of new information to the existing value. During the learning process, $\alpha_t$ can be adjusted dynamically, or it can be chosen to be a constant. However, to guarantee the convergence for the Q-learning algorithm, $\alpha_t$ is deterministic, nonnegative, and satisfies the following conditions~\cite{Watkins1992QLearning}:
\begin{equation}
\label{Eq:rules}
\alpha_t \in [0,1), \sum_{t=1}^{\infty}\alpha_t = \infty, \mbox{ and } \sum_{t=1}^{\infty} ( \alpha_t  )^{2} < \infty.
\end{equation}


Based on~(\ref{Eq:qfunction}), the RMO can employ the Q-learning to obtain the optimal policy. Specifically, the algorithm first initializes the table entry $\mathcal{Q}(\mathbf{s}, a_{\mathbf{s}})$ arbitrarily, e.g., to zero for each state-action pair $(\mathbf{s}, a_{\mathbf{s}})$. From the current state $\mathbf{s}_t$, the algorithm will choose action $a_{\mathbf{s}_t}$ and observe results after performing this action. In practice, to select action $a_{\mathbf{s}_t}$, $\epsilon$-greedy algorithm~\cite{Sutton1998Reinforcement},~\cite{Geramifard2013Atutorial} is often used. Specifically, this method introduces a parameter $\epsilon$ which suggests for the controller in choosing a random action with probability $\epsilon$ or select an action that maximizes the $\mathcal{Q}(\mathbf{s}_t,a_{\mathbf{s}_t})$ with probability $1-\epsilon$. Doing so, the algorithm can explore the whole state space. Hence, we need to balance between the exploration time, i.e., $\epsilon$, and the exploitation time, i.e., $1-\epsilon$, to speed up the convergence of the Q-learning algorithm. The algorithm then determines the next state and reward after performing the chosen action and update the table entry for $\mathcal{Q}(\mathbf{s}_t, a_{\mathbf{s}_t})$ based on~(\ref{Eq:qfunction}). Once either all $Q$-values converge or the certain number of iterations is reached, the algorithm will be terminated. This algorithm yields the optimal policy indicating an action to be taken at each state such that $\mathcal{Q}^*(\mathbf{s},a_{\mathbf{s}})$ is maximized for all states in the state space, i.e., $\pi^*(\mathbf{s}) = \argmax_{a_{\mathbf{s}}} \mathcal{Q}^*(\mathbf{s},a_{\mathbf{s}})$. Under~(\ref{Eq:rules}), it was proved in~\cite{Watkins1992QLearning} that the Q-learning algorithm will converge to the optimum action-values with probability one.

Several studies in the literature reported the application of the Q-learning algorithm to address the network slicing problem, e.g., \cite{Bega2017Optimising}. Note that the Q-learning algorithm can efficiently obtain the optimal policy when the state space and action space are small. However, when the state or action space is large, the Q-learning algorithm often converges prohibitively slow. For the combinatorial resources slicing problem in this work, the state space is can be as large as tens of thousands. That makes the Q-learning algorithm practically inapplicable (especially for real-time resource slicing)~\cite{LillicrapContinuous}. In the sequel, leveraging the deep double Q-learning and deep dueling networks, we develop optimal and fast algorithms to overcome this shortcoming.

\section{Fast and Optimal Resources Slicing with Deep Neural Network}
\label{sec:DeepQlearning}

\subsection{Deep Double Q-Learning}
In this section, we introduce the deep double Q-learning algorithm to address the slow-convergence problem of Q-learning algorithm. Originally, the deep double Q-learning algorithm was developed by Google DeepMind in 2016~\cite{deep_double} to teach machines to play games without human control. Specifically, the deep double Q-learning algorithm is introduced to further improve the performance of the deep Q-learning algorithm~\cite{Mnih2015Human}. The key idea of the deep double Q-learning algorithm is to select an action by using the primary network. It then uses the target network to compute the target Q-value for the action.

\begin{figure*}[!]
	\centering
	\includegraphics[scale=0.22]{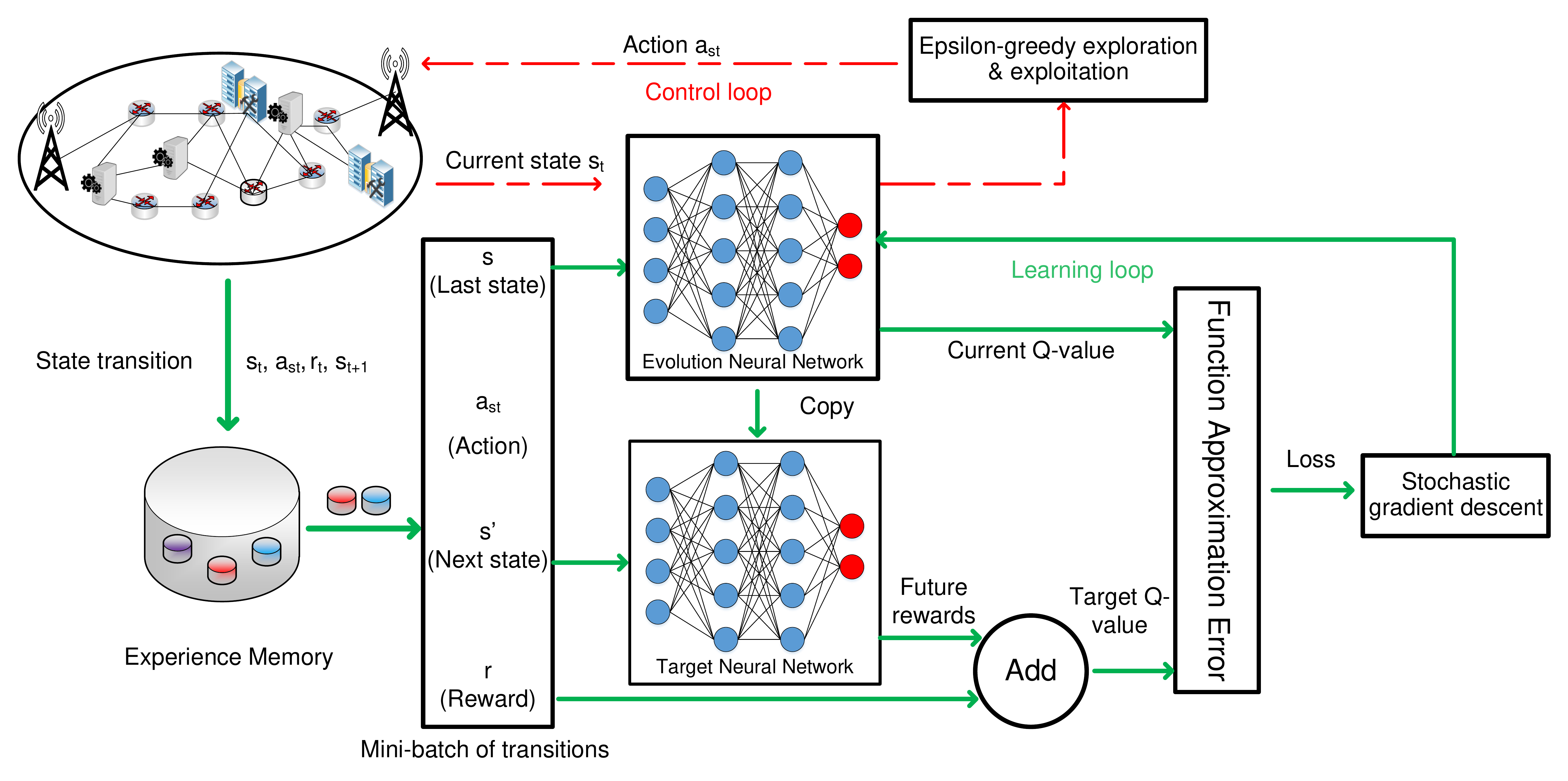}
	\caption{Deep double Q-learning model.}
	\label{Fig.deepqlearning}
\end{figure*}

As pointed in~\cite{Mnih2015Human}, the performance of reinforcement learning approaches might not be stable or even diverges when a nonlinear function approximator is used. This is attributed to the fact that a small change of Q-values may greatly affect the policy. Thereby the data distribution and the correlations between the Q-values and the target values are varied. To address this issue, we use the experience replay mechanism, the target Q-network, and a proper feature set selection:

\begin{itemize}
	\item \textit{Experience replay mechanism:} The algorithm will store transitions $(\mathbf{s}_t, a_{\mathbf{s}_t}, r_t, \mathbf{s}_{t+1})$ in the replay memory, i.e., memory pool, instead of running on state-action pairs as they occur during experience. The learning process is then performed based on random samples from the memory pool. By doing so, the previous experiences are exploited more efficiently as the algorithm can learn them many times. Additionally, by using the experience mechanism, the data is more like independent and identically distributed. That removes the correlations between observations.
	\item \textit{Target Q-network:} In the training process, the Q-value will be shifted. Thus, the value estimations can be out of control if a constantly shifting set of values is used to update the Q-network. This destabilizes the algorithm. To address this issue, we use the target Q-network to frequently (but slowly) update to the primary Q-networks values. That significantly reduces the correlations between the target and estimated Q-values, thereby stabilizing the algorithm.
	\item \textit{Feature set: } For each state, we determine four features including radio, computing, storage, and event trigger, i.e., a slice request arrives. These features are then fed into the deep neural network to approximate the Q-values for each action of a state. As such, all aspects of each state are trained in the deep neural network, resulting in a higher convergence rate.
\end{itemize}
\begin{algorithm}
	\caption{Deep Double Q-learning Based Resources Slicing Algorithm}
	\label{deepqlearning}
	\begin{algorithmic}[1]
		\State Initialize replay memory to capacity $D$.
		\State Initialize the Q-network $\mathcal{Q}$ with random weights $\theta$.
		\State Initialize the target Q-network $\hat{\mathcal{Q}}$ with weight $\theta^-=\theta$.
		\For{\textit{episode=1 to T}}
		\State With probability $\epsilon$ select a random action $a_{\mathbf{s}_t}$, otherwise select $a_{\mathbf{s}_t}=\argmax \mathcal{Q}^*(\mathbf{s}_t, a_{\mathbf{s}_t}; \theta)$
		\State Perform action $a_{\mathbf{s}_t}$ and observe reward $r_t$ and next state $\mathbf{s}_{t+1}$
		\State Store transition $(\mathbf{s}_t, a_{\mathbf{s}_t}, r_t, \mathbf{s}_{t+1})$ in the replay memory
		\State Sample random minibatch of transitions $(\mathbf{s}_j, a_{\mathbf{s}_j}, r_j, \mathbf{s}_{j+1})$ from the replay memory
		\State $y_j=r_j+\gamma\mathcal{Q}(\mathbf{s}_{j+1},\argmax_{a_{\mathbf{s}_{j+1}}}\hat{\mathcal{Q}}(\mathbf{s}_{j+1}, a_{\mathbf{s}_{j+1}};\theta);\theta^-)$
		\State Perform a gradient descent step on $(y_j-\mathcal{Q}(\mathbf{s}_j, a_{\mathbf{s}_j}; \theta))^2$ with respect to the network parameter $\theta$.
		\State Every $C$ steps reset $\hat{\mathcal{Q}} = \mathcal{Q}$
		\EndFor
	\end{algorithmic}
\end{algorithm}

The details of the deep double Q-learning algorithm is provided in Algorithm~\ref{deepqlearning} and explained more details in the flowchart in Fig.~\ref{Fig.flowchart}. Specifically, as shown in Fig.~\ref{Fig.flowchart}, the training phase is composed of multiple episodes. In each episode, the RMO performs an action and learns from observations corresponding to the taken action. As a result, the RMO needs to tradeoff between the exploration and exploitation processes over the state space. Therefore, in each episode, given the current state, the algorithm will choose an action based on the epsilon-greedy algorithm. The algorithm will start with a fairly randomized policy and later slowly move to a deterministic policy. This means that, at the first episode, $\epsilon$ is set at a large value, e.g., 0.9, and gradually decayed to a small value, e.g., 0.1. After that, the RMO will perform the selected action and observe results, i.e., next state and reward, from taking this action. This transition is then stored in the replay memory for the training process at later episodes.

In the learning process, random samples of transitions from the replay memory will be fed into the neural network. In particular, for each state, we formulate 4 features, i.e., radio, computing, storage, and arrival event, as the input of the deep neural network. In this way, the training process is more efficient as all aspects of states are taken into account. The algorithm then updates the neural network by minimizing the following lost functions~\cite{Mnih2015Human}.
\begin{equation}
\label{lossfunction}
\begin{aligned}
L_i(\theta_i)\!\!=\!\!\mathbb{E}_{(\mathbf{s},a_{\mathbf{s}},r,\mathbf{s'})\sim U(D)}\!\!\bigg[ \!\!\bigg( r& \!+\!\gamma\mathcal{Q}(\mathbf{s'},\!\argmax_{a_{\mathbf{s'}}}\hat{\mathcal{Q}}(\mathbf{s'}, a_{\mathbf{s'}};\theta);\theta^-\!)\!)\\
&- \mathcal{Q}(\mathbf{s},a_{\mathbf{s}};\theta_i)\bigg)^2\bigg],
\end{aligned}
\end{equation}
where $\gamma$ is the discount factor, $\theta_i$ are the parameters of the Q-networks at episode $i$ and $\theta_i^-$ are the parameters of the target network, i.e., $\hat{\mathcal{Q}}$. $\theta_i$ and $\theta_i^-$ are used to compute the target at episode $i$.

Differentiating the loss function in~(\ref{lossfunction}) with respect to the parameters of the neural networks, we have the following gradient:
\begin{equation}
\label{gradient_loss}
\begin{aligned}
\nabla_{\theta_i}L(\theta_i)&\!\!=\!\!\mathbb{E}_{(\mathbf{s},a_{\mathbf{s}},r,\mathbf{s'})}\!\!\bigg[\!\!\bigg(r +\gamma\mathcal{Q}(\mathbf{s'},\argmax_{a_{\mathbf{s'}}}\hat{\mathcal{Q}}(\mathbf{s'}, a_{\mathbf{s'}};\theta);\theta^-))\\
&-\mathcal{Q}(\mathbf{s},a_{\mathbf{s}};\theta_i)\nabla_{\theta_i}\mathcal{Q}(\mathbf{s},a_{\mathbf{s}};\theta_i)\bigg)\bigg].
\end{aligned}
\end{equation}
From~(\ref{gradient_loss}), the loss function in~(\ref{lossfunction}) can be minimized by the \textit{Stochastic Gradient Descent} algorithm~\cite{Goodfellow2016Deep} that is the engine of most deep learning algorithms. Specifically, stochastic gradient descent is an extension of the gradient descent algorithm which is commonly used in machine learning. In general, the cost function used by a machine learning algorithm is decayed by a sum over training examples of some per-example loss function. For example, the negative conditional log-likelihood of the training data can be formulated as:
\begin{equation}
\begin{aligned}
J(\theta) &= \mathbb{E}_{(\mathbf{s},a_{\mathbf{s}},r,\mathbf{s'}) \sim U(D)}L\big((\mathbf{s},a_{\mathbf{s}},r,\mathbf{s'}),\theta\big)\\
&=\frac{1}{D}\sum_{i=1}^{D}L\big((\mathbf{s},a_{\mathbf{s}},r,\mathbf{s'})^{(i)}, \theta\big).
\end{aligned}
\end{equation}
For these additive cost function, gradient descent requires computing as follows:
\begin{equation}
\label{eq:cost}
\nabla_\theta J(\theta) = \frac{1}{D}\sum_{i=1}^{D} \nabla_\theta L\big((\mathbf{s},a_{\mathbf{s}},r,\mathbf{s'})^{(i)}, \theta\big).
\end{equation}
The computational cost for operation in~(\ref{eq:cost}) is $O(D)$. Thus, as the size $D$ of the replay memory is increased, the time to take a single gradient step becomes prohibitively long. As a result, the stochastic gradient descent technique is used in this paper. The core idea of using stochastic gradient descent is that the gradient is an expectation. Obviously, the expectation can be approximately estimated by using a small set of samples. In particular, we can uniformly sample a mini-batch of experiences from the replay memory at each step of the algorithm. Typically, the mini-batch size can be set to be relative small number of experiences, e.g., from 1 to a few hundreds. As such, the training time is significantly fast. The estimate of the gradient under the stochastic gradient descent is then rewritten as follows:
\begin{equation}
g=\frac{1}{D'} \nabla_\theta\sum_{i=1}^{D'}L\big((\mathbf{s},a_{\mathbf{s}},r,\mathbf{s'})^{(i)}, \theta \big),
\end{equation}
where $D'$ is the mini-batch size. The stochastic gradient descent algorithm then follows the estimated gradient downhill as in~(\ref{eq:downhill}).
\begin{equation}
\label{eq:downhill}
\theta \leftarrow \theta- \nu g,
\end{equation}
where $\nu$ is the learning rate of the algorithm.

The target network parameters $\theta_i^-$ are only updated with the Q-network parameters $\theta_i$ every $C$ steps and are remained fixed between individual updates. It is worth noting that the training process of the deep double Q-learning algorithm is different from the training process in supervised learning by updating the network parameters using previous experiences in an online manner. 
\begin{figure}[h]
	\centering
	\includegraphics[scale=0.13]{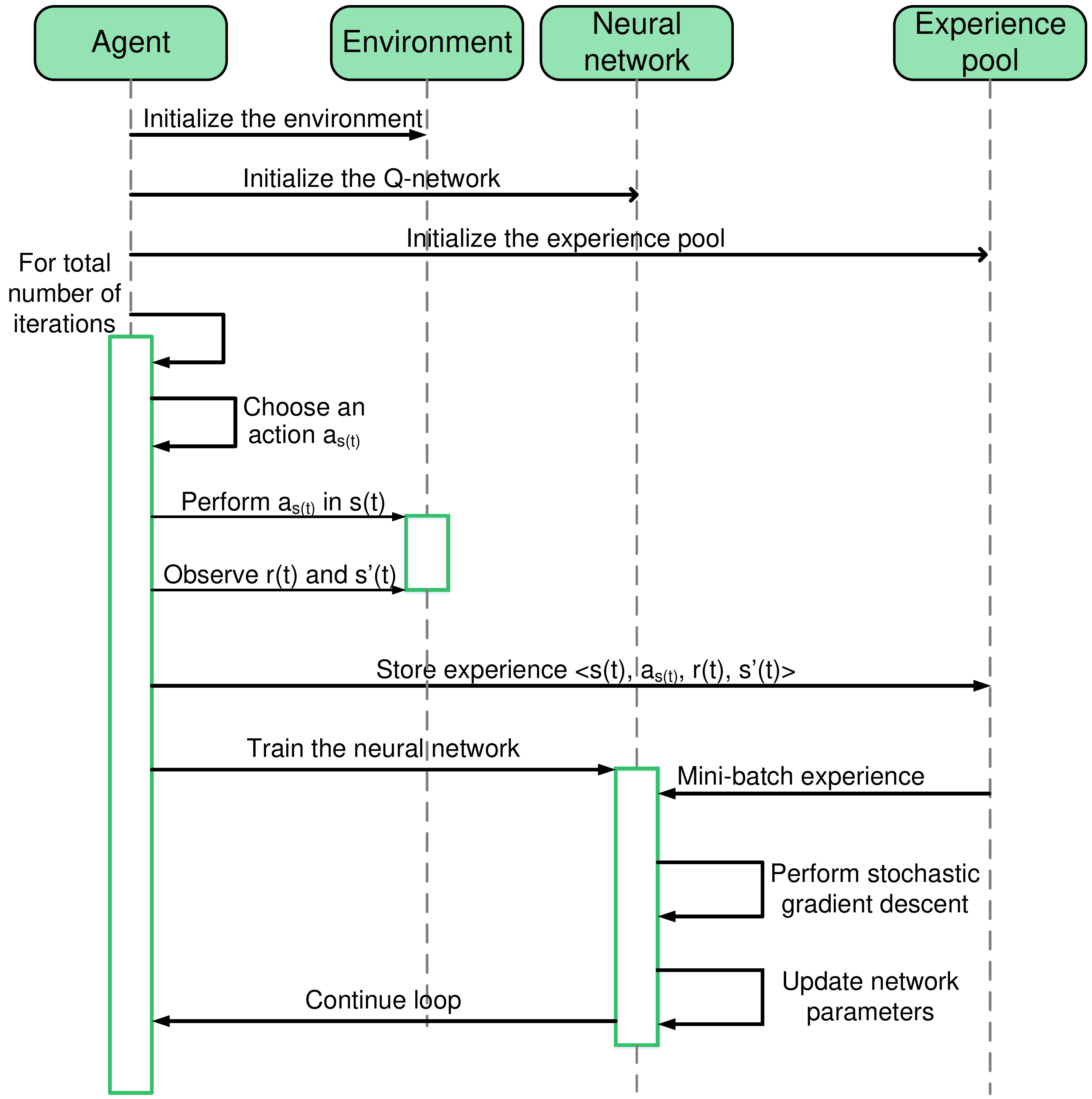}
	\caption{Flow chart of the deep double Q-learning algorithm.}
	\label{Fig.flowchart}
\end{figure}

\subsection{Deep Dueling Network}

Due to the overestimation of optimizers, the convergence rates of the deep Q-learning and deep double Q-learning learning algorithms are still limited, especially in large-scale systems~\cite{Wang2015Dueling}. Therefore, we propose the novel network slicing framework using the deep dueling algorithm~\cite{Wang2015Dueling}, which is also originally developed by Google DeepMind in 2016, to further improve the system's convergence speed. The key idea making the deep dueling superior to conventional approaches is its novel neural network architecture. In this neural network, instead of estimating the action-value function, i.e., Q-function, the values of states and advantages of actions{\footnote{The value function represents how good it is for the system to be in a given state. The advantage function is used to measure the importance of a certain action compared with others \cite{Wang2015Dueling}.}} are separately estimated by two sequences, i.e., streams, of fully connected layers. The values and advantages are combined at the output layer as shown in Fig.~\ref{Fig.deepduelingqlearning}. The reason behind this architecture is that in many states it is unnecessary to estimate the value of corresponding actions as the choice of these actions has no repercussion on what happens~\cite{Wang2015Dueling}. In this way, the deep dueling algorithm can achieve more robust estimates of state value, thereby significantly improving its convergence rate as well as stability.

\begin{figure}[h]
	\centering
	\includegraphics[scale=0.22]{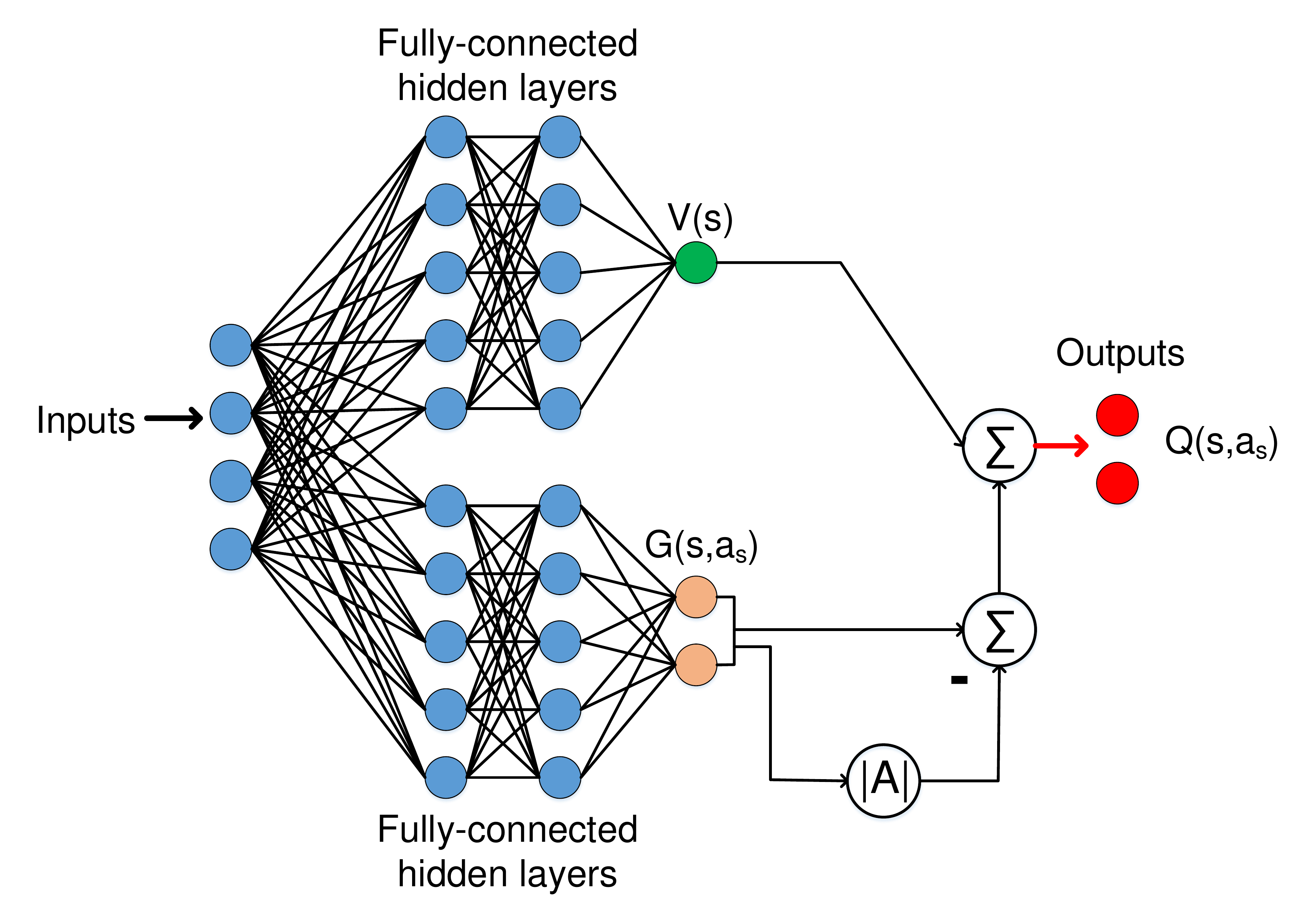}
	\caption{Deep dueling model.}
	\label{Fig.deepduelingqlearning}
\end{figure}

Recall that given a stochastic policy $\pi$, the values of state-action pair $(\mathbf{s},a_{\mathbf{s}})$ and state $\mathbf{s}$ are expressed as:
\begin{equation}
\begin{aligned}
\mathcal{Q}^{\pi}(\mathbf{s},a_{\mathbf{s}}) &= \mathbb{E}\big[\mathcal{R}_t|\mathbf{s}_t=\mathbf{s}, a_{\mathbf{s}_t}=a_{\mathbf{s}},\pi\big] \mbox{ and }\\
\mathcal{V}_{\pi}(\mathbf{s}) &= \mathbb{E}_{a_{\mathbf{s}} \sim \pi(\mathbf{s})}\big[\mathcal{Q}^{\pi}(\mathbf{s},a_{\mathbf{s}})\big].
\end{aligned}
\end{equation}
The advantage function of actions can be computed as:
\begin{equation}
\mathcal{G}^{\pi}(\mathbf{s},a_{\mathbf{s}}) =  \mathcal{Q}^{\pi}(\mathbf{s},a_{\mathbf{s}}) - \mathcal{V}^{\pi}(\mathbf{s},a_{\mathbf{s}}).
\end{equation}
Specifically, the value function $\mathcal{V}$ corresponds to how good it is to be in a particular state $\mathbf{s}$. The state-action pair, i.e., Q-function, measures the value of selecting action $a_{\mathbf{s}}$ in state $\mathbf{s}$. The advantage function decouples the state value from Q-function to obtain a relative measure of the importance of each action. It is important to note that $\mathbb{E}_{a_{\mathbf{s}} \sim \pi(\mathbf{s})}\big[\mathcal{G}^{\pi}(\mathbf{s},a_{\mathbf{s}})\big] = 0$. In addition, given a deterministic policy $a_{\mathbf{s}}^*= \argmax_{a_{\mathbf{s}} \in \mathcal{A}}\mathcal{Q}(\mathbf{s}, a_{\mathbf{s}})$, we have $\mathcal{Q}(\mathbf{s}, a_{\mathbf{s}}^*)=\mathcal{V}(\mathbf{s})$, and hence $\mathcal{G}(\mathbf{s}, a_{\mathbf{s}}^*)=0$. 

To estimate values of $\mathcal{V}$ and $\mathcal{G}$ functions, we use a dueling neural network in which one stream of fully-connected layers outputs a scalar $\mathcal{V}(\mathbf{s};\beta)$ and the other stream outputs an $|\mathcal{A}|$-dimensional vector $\mathcal{G}(\mathbf{s}, a_{\mathbf{s}};\alpha)$ with $\alpha$ and $\beta$ are the parameters of fully-connected layers. These two streams are then combined to obtain the Q-function by~(\ref{combined}).
\begin{equation}
\label{combined}
\mathcal{Q}(\mathbf{s}, a_{\mathbf{s}};\alpha, \beta) = \mathcal{V}(\mathbf{s};\beta) + \mathcal{G}(\mathbf{s}, a_{\mathbf{s}};\alpha).
\end{equation}
However, $\mathcal{Q}(\mathbf{s}, a_{\mathbf{s}};\alpha, \beta)$ is only a parameterized estimate of the true Q-function. Moreover, given $Q$, we cannot obtain $\mathcal{V}$ and $\mathcal{G}$ uniquely. Therefore, (\ref{combined}) is unidentifiable resulting in poor performance. To address this issue, we let the combining module of the network implement the following mapping:
\begin{equation}
\label{ouput_max}
\mathcal{Q}(\mathbf{s},a_{\mathbf{s}};\alpha,\beta) = \mathcal{V}(\mathbf{s};\beta) + \big(\mathcal{G}(\mathbf{s},a_{\mathbf{s}};\alpha)-\max_{a_{\mathbf{s}} \in \mathcal{A}}\mathcal{G}(\mathbf{s},a_{\mathbf{s}};\alpha)\big).
\end{equation}
By doing this, the advantage function estimator has zero advantage when choosing action. Intuitively, given $a_{\mathbf{s}}^*=\argmax_{a_{\mathbf{s}}\in \mathcal{A}} \mathcal{Q}(\mathbf{s},a_{\mathbf{s}};\alpha,\beta)=\argmax_{a_{\mathbf{s}} \in \mathcal{A}}\mathcal{G}(\mathbf{s},a_{\mathbf{s}};\alpha)$, we have $\mathcal{Q}(\mathbf{s}, a_{\mathbf{s}}^*;\alpha, \beta)=\mathcal{V}(\mathbf{s};\beta)$. (~\ref{ouput_max}) can be transformed into a simple form by replacing the max operator with an average as in~(\ref{output_average}).
\begin{equation}
\label{output_average}
\mathcal{Q}(\mathbf{s},a_{\mathbf{s}};\alpha,\beta) = \mathcal{V}(\mathbf{s};\beta) + \big(\mathcal{G}(\mathbf{s},a_{\mathbf{s}};\alpha)- \frac{1}{|\mathcal{A}|}\sum_{a_{\mathbf{s}}}^{}\mathcal{G}(\mathbf{s}, a_{\mathbf{s}};\alpha)\big).
\end{equation}
\begin{algorithm}[h]
	\caption{Deep Dueling Network Based Resources Slicing Algorithm}
	\label{deepduelingqlearning}
	\begin{algorithmic}[1]
		\State Initialize replay memory to capacity $D$.
		\State Initialize the primary network $\mathcal{Q}$ including two fully-connected layers with random weights $\alpha$ and $\beta$.
		\State Initialize the target network $\hat{\mathcal{Q}}$ as a copy of the primary Q-network with weights $\alpha^- = \alpha$ and $\beta^- = \beta$.
		\For{\textit{episode=1 to T}}
		\State Base on the epsilon-greedy algorithm, with probability $\epsilon$ select a random action $a_{\mathbf{s}_t}$, otherwise
		\State select $a_{\mathbf{s}_t}=\argmax \mathcal{Q}^*(\mathbf{s}_t, a_{\mathbf{s}_t}; \alpha, \beta)$
		\State Perform action $a_{\mathbf{s}_t}$ and observe reward $r_t$ and next state $\mathbf{s}_{t+1}$
		\State Store transition $(\mathbf{s}_t, a_{\mathbf{s}_t}, r_t, \mathbf{s}_{t+1})$ in the replay memory
		\State Sample random minibatch of transitions $(\mathbf{s}_j, a_{\mathbf{s}_j}, r_j, \mathbf{s}_{j+1})$ from the replay memory
		\State Combine the value function and advantage functions as follows:
		\begin{equation}
		\begin{aligned}
		\mathcal{Q}(\mathbf{s}_j,a_{\mathbf{s}_j};\alpha,\beta) = \mathcal{V}(\mathbf{s}_j;\beta) &+ \big(\mathcal{G}(\mathbf{s}_j,a_{\mathbf{s}_j};\alpha)\\
		&- \frac{1}{|\mathcal{A}|}\sum_{a_{\mathbf{s}_j}}^{}\mathcal{G}(\mathbf{s}_j, a_{\mathbf{s}_j};\alpha)\big).
		\end{aligned}
		\end{equation}
		\State $y_j=r_j+\gamma\max_{a_{\mathbf{s}_{j+1}}}\hat{\mathcal{Q}}(s_{j+1},a_{\mathbf{s}_{j+1}}; \alpha^-, \beta^-)$
		\State Perform a gradient descent step on $(y_j-\mathcal{Q}(\mathbf{s}_j, a_{\mathbf{s}_j}; \alpha, \beta))^2$
		\State Every $C$ steps reset $\hat{\mathcal{Q}} = \mathcal{Q}$
		\EndFor
	\end{algorithmic}
\end{algorithm}

Based on~(\ref{output_average}) and the advantages of the deep reinforcement learning, the details of the deep dueling algorithm used in our proposed approach are shown in Algorithm~\ref{deepduelingqlearning}. It is important to note that~(\ref{output_average}) is viewed and implemented as a part of the network and not as a separate algorithmic step~\cite{Wang2015Dueling}. In addition, $\mathcal{V}(\mathbf{s};\beta)$ and $\mathcal{G}(\mathbf{s},a_{\mathbf{s}};\alpha)$ are estimated automatically without any extra supervision or modifications in the algorithm.

\section{Performance Evaluation}
\label{sec:evaluation}

\subsection{Parameter Setting}

We perform the simulations using TensorFlow~\cite{TensorFlow} to evaluate the performance of the proposed solutions under different parameter settings. We consider three common classes of slices, i.e., utilities (class-1), automotive (class-2), and manufacturing (class-3). Unless otherwise stated, the arrival rates $\lambda_c$ of requests from class-1, class-2, and class-3 are set at 12 requests/hour, 8 requests/hour, and 10 requests/hour, respectively. The completion rates $\mu_c$ of requests from class-1, class-2, and class-3 are set at 3 requests/hour. The immediate reward $r_c$ for each accepted request from class-1, class-2, and class-3 are 1, 2, and 4, respectively. These parameters will be varied later to evaluate the impacts of the immediate reward on the decisions of the RMO. Each slice request requires 1 GB of storage resources, 2 CPUs for computing, and 100 Mbps of radio resources~\cite{SattarOptimal}. Importantly, the architecture of the deep neural network requires thoughtful design as it greatly affects the performance of the algorithm. Intuitively, increasing the number of hidden layers will increase the complexity of the algorithm. However, when the number of hidden layers is very small, the algorithm may not converge to the optimal policy. Similarly, when the size of hidden layers and mini-batch size are large, the algorithm will need more time to estimate the Q-function. In our experiment, we choose these parameters based on common settings in the literature~\cite{Mnih2015Human,Wang2015Dueling}. In particular, for the deep Q-learning and deep double Q-learning algorithms, two fully-connected hidden layers are implemented together with input and output layers. For the deep dueling algorithm, the neural network is divided into two streams~\cite{Wang2015Dueling}. Each stream consists of a hidden layer connected to the input and output layers. The size of the hidden layers is 64. The mini-batch size is set at 64. Both the Q-learning algorithm and the deep reinforcement learning algorithms use $\epsilon$-greedy algorithm with the initial value of $\epsilon$ is 1, and its final value is 0.1~\cite{Watkins1992QLearning,Chen2018Reinforcement}. The maximum size of the experience replay buffer is 10,000, and the target Q-network is updated every 1,000 iterations~\cite{Mnih2015Human},~\cite{Goodfellow2016Deep}.

\subsection{Simulation Results}
\label{sec:evaluationB}

\subsubsection{Performance Evaluation}

\paragraph{Comparison to Existing Network Slicing Solutions}
\begin{figure}[!]
	\centering
	\includegraphics[scale=0.27]{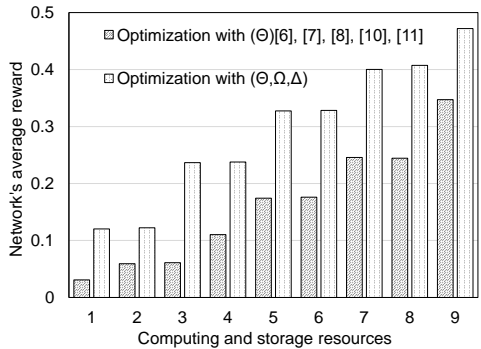}
	\caption{The average reward when optimizing with one resource and three resources.}
	\label{Fig.compare}
\end{figure}
\begin{figure}[!]
	\centering
	\includegraphics[scale=0.33]{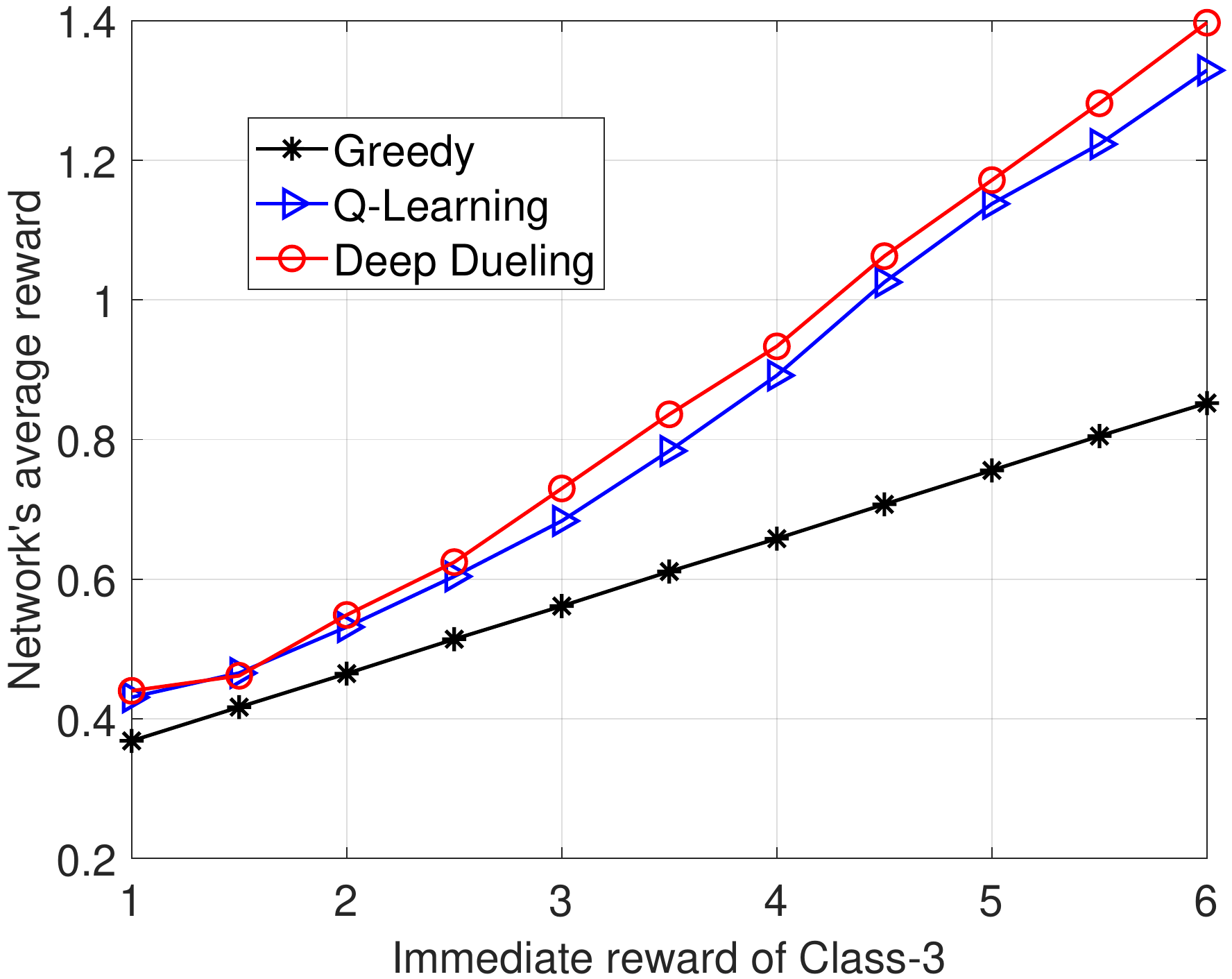}
	\caption{The average reward of the system when the immediate reward of class-3 is varied.}
	\label{Fig.reward_vary_rew}
\end{figure}
As mentioned, most existing works, e.g., \cite{Jiang2016Network, Soliman2016QoS, Sciancalepore, Bega2017Optimising, Aijaz2017Hap} optimized slicing for only the radio resource. In practice, besides the radio resource, both computing and storage resource should also be accounted for while orchestrating slices. This makes existing solutions sub-optimal. In this section, we set the maximum radio resources at 500 Mbps. Each request requires 50 Mbps for radio access, 2 CPUs for computing, and 2 GB of storage resources. The computing and storage resources are then varied from 1 CPU to 9 CPUs and 1 GB to 9 GB, respectively. Fig.~\ref{Fig.compare} shows the average reward of the system obtained by the Q-learning algorithm for the case with three resources are taken into account (as in our considered system model) and for the case with only radio resource as considered in~\cite{Jiang2016Network, Soliman2016QoS, Sciancalepore, Bega2017Optimising, Aijaz2017Hap}. As can be observed, when the computing and storage resources increase, the average reward is increased as more slice requests are accepted. However, the average reward of our approach (taking all radio, computing, and storage resources into account) is significantly higher than those of other solutions in the literature, especially when the amount of computing and storage resources are small. This is due to the fact that slices not only request radio resources to ensure the bandwidth for connections but also computing and storage resources to fulfill the requirements of different services.

\paragraph{Average Reward and Network Performance}
\begin{figure*}[!]
	\centering
	\begin{subfigure}[b]{0.28\textwidth}
		\centering
		\includegraphics[scale=0.28]{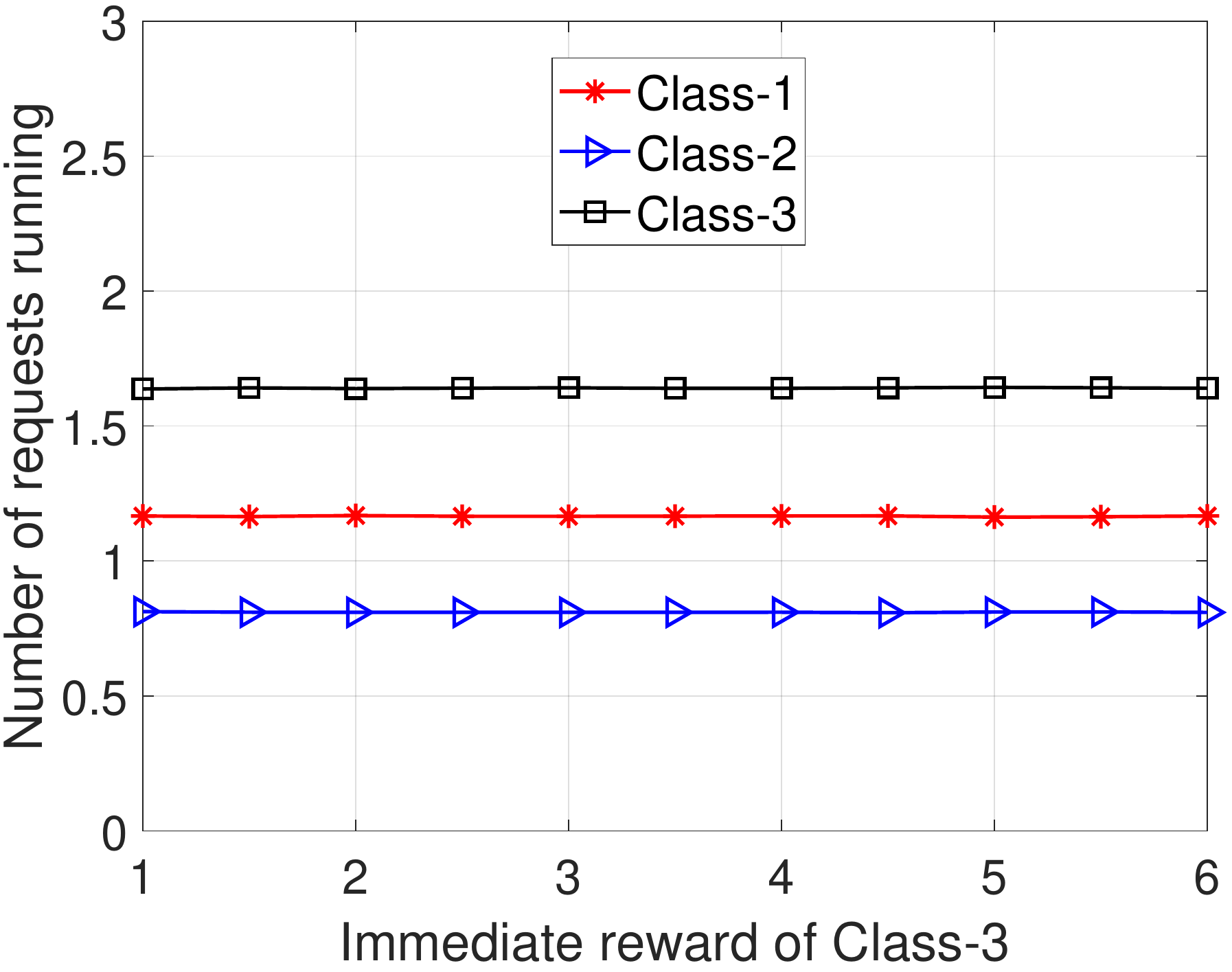}
		\caption{}
	\end{subfigure}%
	~ 
	\begin{subfigure}[b]{0.28\textwidth}
		\centering
		\includegraphics[scale=0.28]{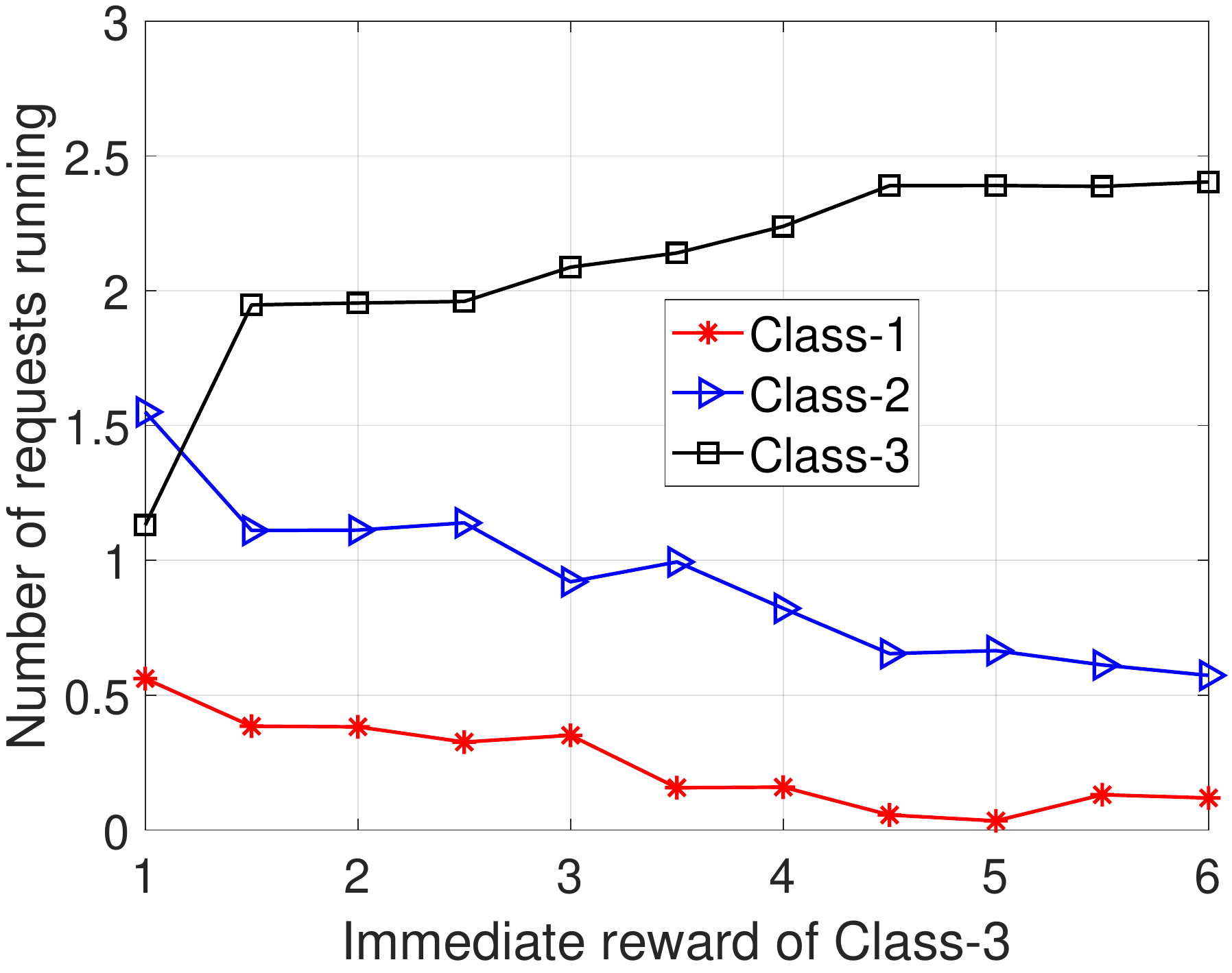}
		\caption{}
	\end{subfigure}%
	~ 
	\begin{subfigure}[b]{0.28\textwidth}
		\centering
		\includegraphics[scale=0.28]{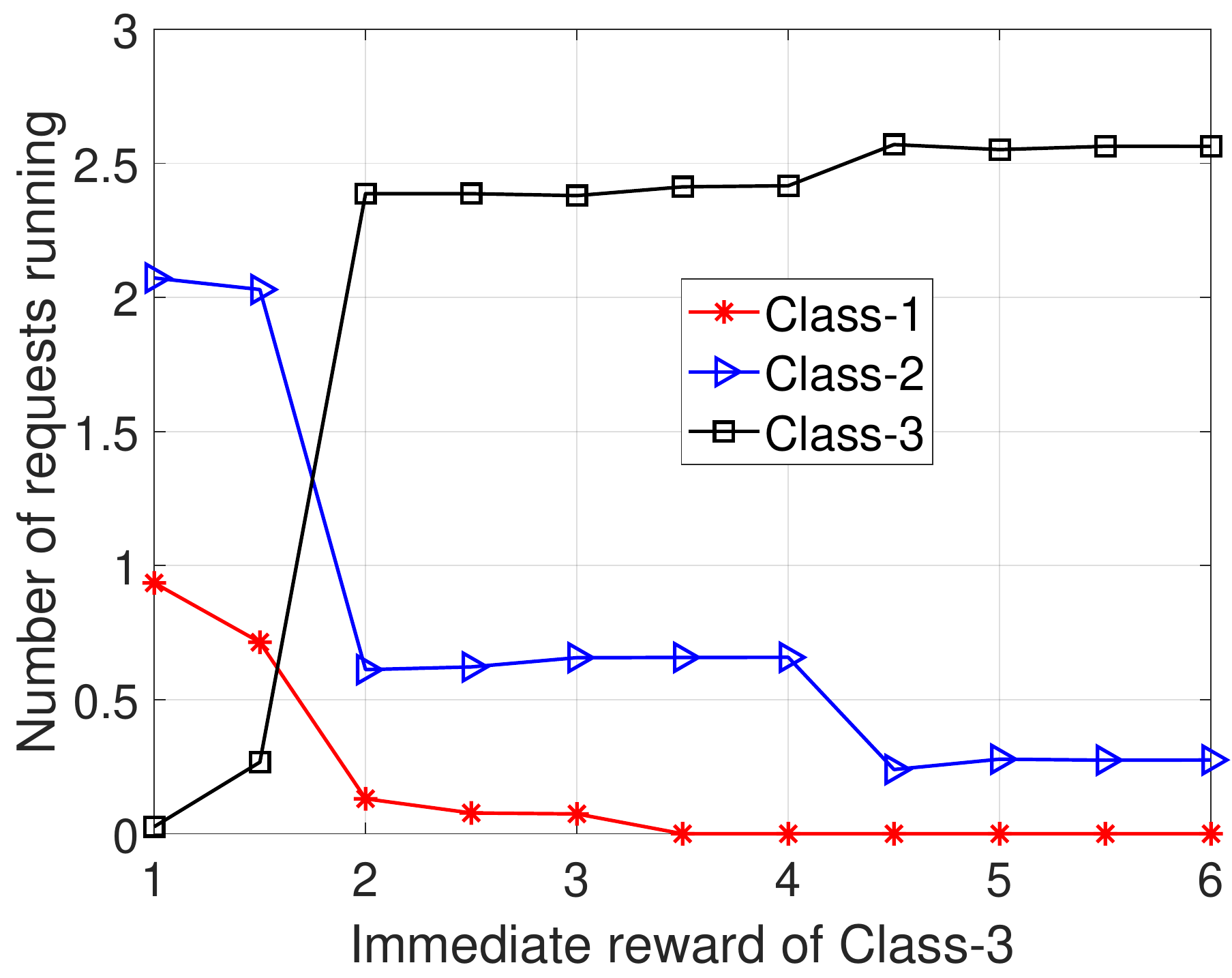}
		\caption{}
	\end{subfigure}
	\caption{The number of request running in the system of (a) greedy algorithm, (b) Q-learning algorithm, and (c) deep dueling algorithm when the immediate reward of class-3 is varied.} 
	\label{fig:vary_reward_running}
\end{figure*}

Next, we compare the performance of the proposed solution, i.e., deep dueling algorithm, with other methods, i.e., Q-learning~\cite{Bega2017Optimising} and greedy algorithms~\cite{Aijaz2017Hap},~\cite{Han2018Sice}, in terms of average reward and the number of requests running in the system. For a small-size system (the maximum radio, computing, and storage resources are set at 400 Mbps, 8 CPUs, and 4 GB, respectively), Fig.~\ref{Fig.reward_vary_rew} shows the average reward of the system obtained by three algorithms while varying the reward of slices from class-3 from 1 to 6. As can be seen, with the increasing of the reward of slices from class-3, the average reward of the system is increased. However, the average reward obtained by the reinforcement learning algorithms, i.e., deep dueling and Q-learning, is significantly higher than that of the greedy algorithm. This is due to the fact that the proposed reinforcement learning approaches reserve resources for coming requests that may have high rewards, while the greedy algorithm accepts slices based on the available resource of the system as shown in Fig.~\ref{fig:vary_reward_running}. It is worth noting that the achieved reward of the Q-learning algorithm is not as good as the reward obtained by the deep dueling algorithm even with small-size scenarios. This is because that the Q-learning algorithm has a slow convergence rate due to the curse-of-dimensionality problem. This observation is more pronounced when we later increase the size of the system.

As observed in Fig.~\ref{fig:vary_reward_running}, the number of requests running in the systems under the greedy algorithm remains the same when the immediate reward of slices from class-3 is varied. The reason is that the greedy algorithm does not consider the immediate reward of slice requests into account. In other words, upon receiving a slice request, the greedy algorithm will accept this request if the available resources of the infrastructure satisfy the slice service demands. In contrast, for the reinforcement learning algorithms, the immediate reward is also an essential factor to make optimal decisions. In particular, when the immediate reward of slice requests from class-3 increases, the algorithms are likely to reject the slice requests from classes which have lower immediate rewards, i.e., slice requests from class-1. For example, when the immediate reward of slice request from class-3 is 6, the number of requests from class-1, whose immediate reward is 1, approaches 0.

To observe the performance of the proposed solutions when the state space of the system is large, we increase the radio, computing, and storage resources to 2 Gbps, 40 CPUs, and 20 GB, respectively.
\begin{figure}[!]
	\centering
	\includegraphics[scale=0.33]{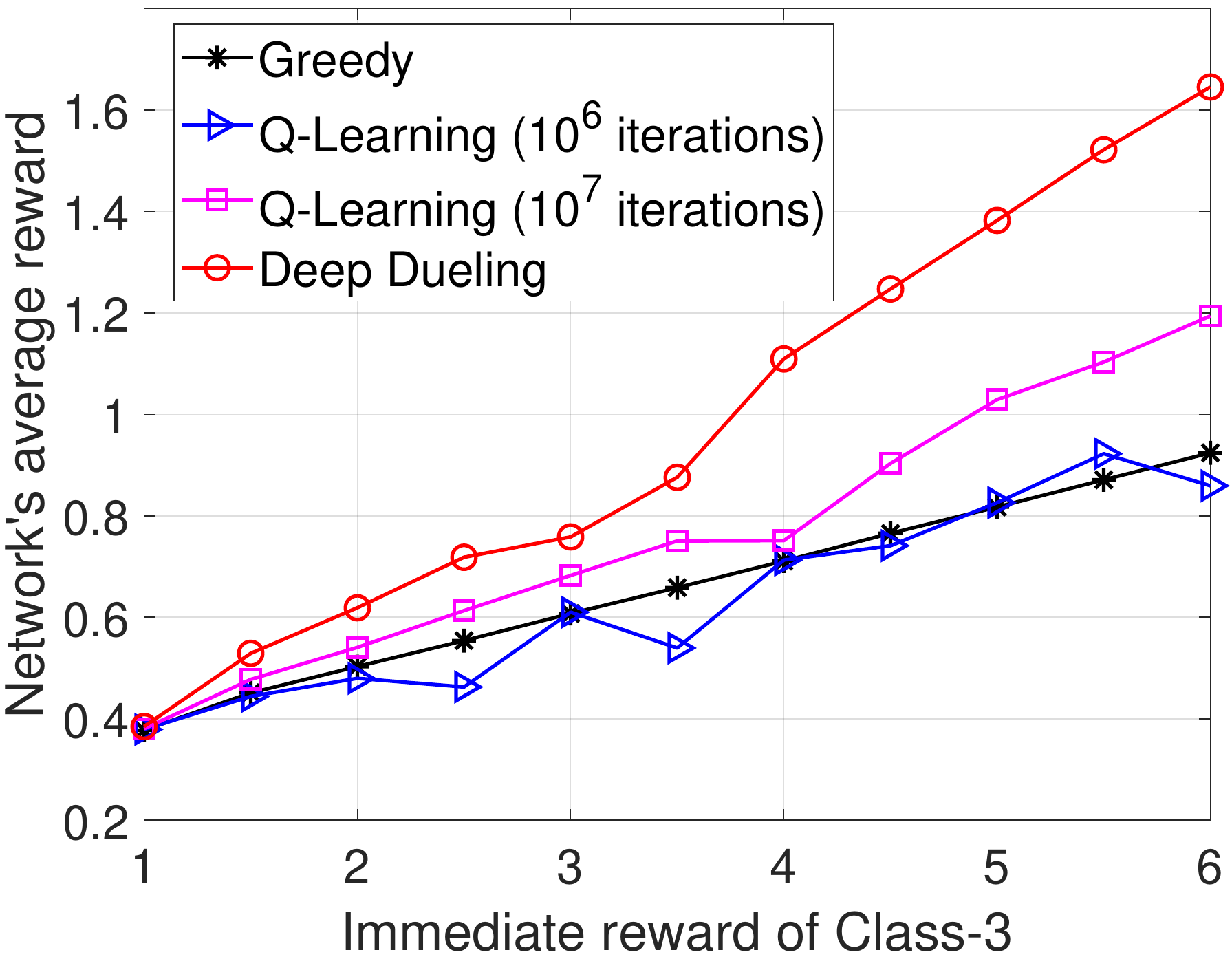}
	\caption{The average reward of the system when the immediate reward of class-3 is varied.}
	\label{Fig.reward_vary_rew_20}
\end{figure}
The arrival rate of requests from class-1 is 48 requests/hour, from class-2 is 32 requests/hour, and from class-3 is 40 requests/hour. The completion rates from all classes are set at 2 requests/hour. Fig.~\ref{Fig.reward_vary_rew_20} shows that the average reward obtained by the deep dueling algorithm is much higher than those of the greedy and Q-learning algorithms. This is because of the slow convergence of the Q-learning algorithm to optimality. Specifically, with $10^6$ iterations, the performance of the Q-learning algorithm is just the same as that of the greedy algorithm. The performance of the Q-learning algorithm is improved with $10^7$ iterations, but it is still way inferior to that of the deep dueling algorithm. For this large system scenario with over 74,000 state-action pairs, on a laptop with Intel Core i7-7600U and 16GB RAM, the deep dueling algorithm just takes about 2 hours to finish 15,000 iterations and obtain the optimal policy. This is a very practical number compared with the Q-learning algorithm that cannot obtain the optimal policy within $10^7$ iterations (more than 15 hours). In practice, with specialized hardware and much more powerful computing resource (compared with our laptop) at the network provider (e.g., GPU cards from NVIDIA), the deep dueling algorithm should take much shorter than 2 hours to finish 15,000 iterations~\cite{Uz}. These results confirm that the Q-learning algorithm, despite its optimality, requires a much longer time to converge, compared with the deep dueling algorithm.


\begin{figure*}[htbp]
	\begin{minipage}[t]{1.0\textwidth}
	\centering
	\begin{subfigure}[b]{0.3\textwidth}
		\centering
		\includegraphics[scale=0.28]{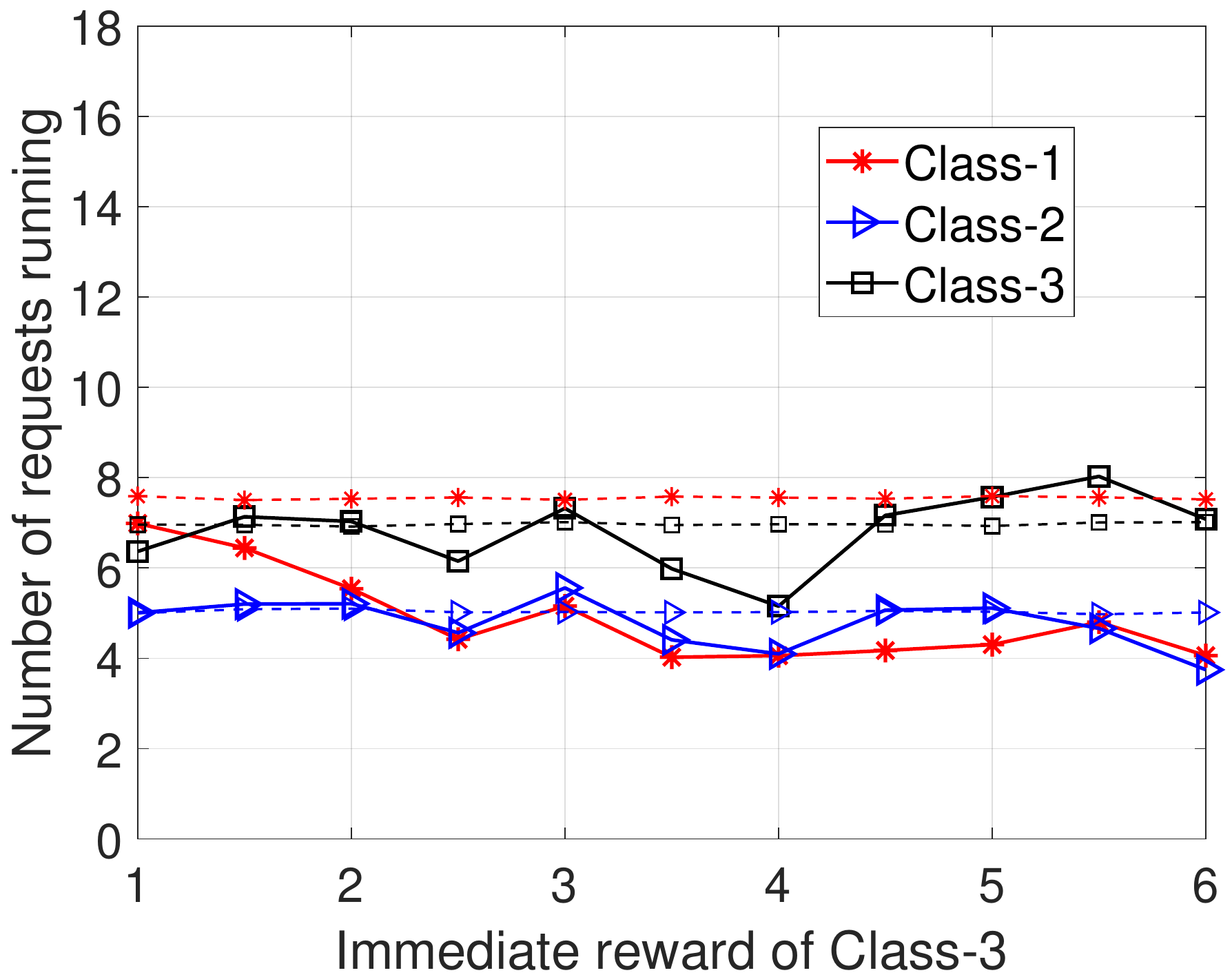}
		\caption{}
	\end{subfigure}%
	~ 
	\begin{subfigure}[b]{0.3\textwidth}
		\centering
		\includegraphics[scale=0.28]{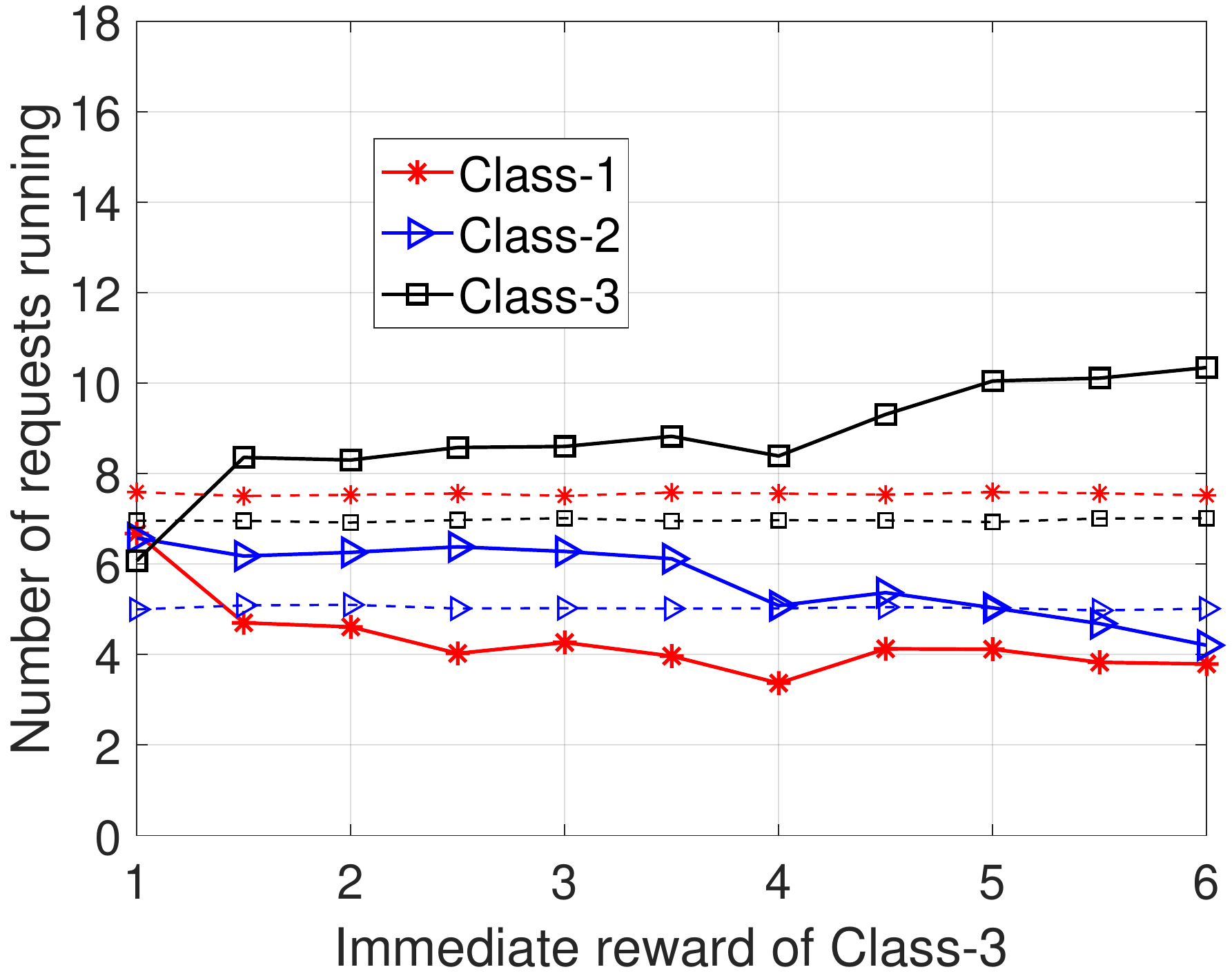}
		\caption{}
	\end{subfigure}%
	~ 
	\begin{subfigure}[b]{0.3\textwidth}
		\centering
		\includegraphics[scale=0.28]{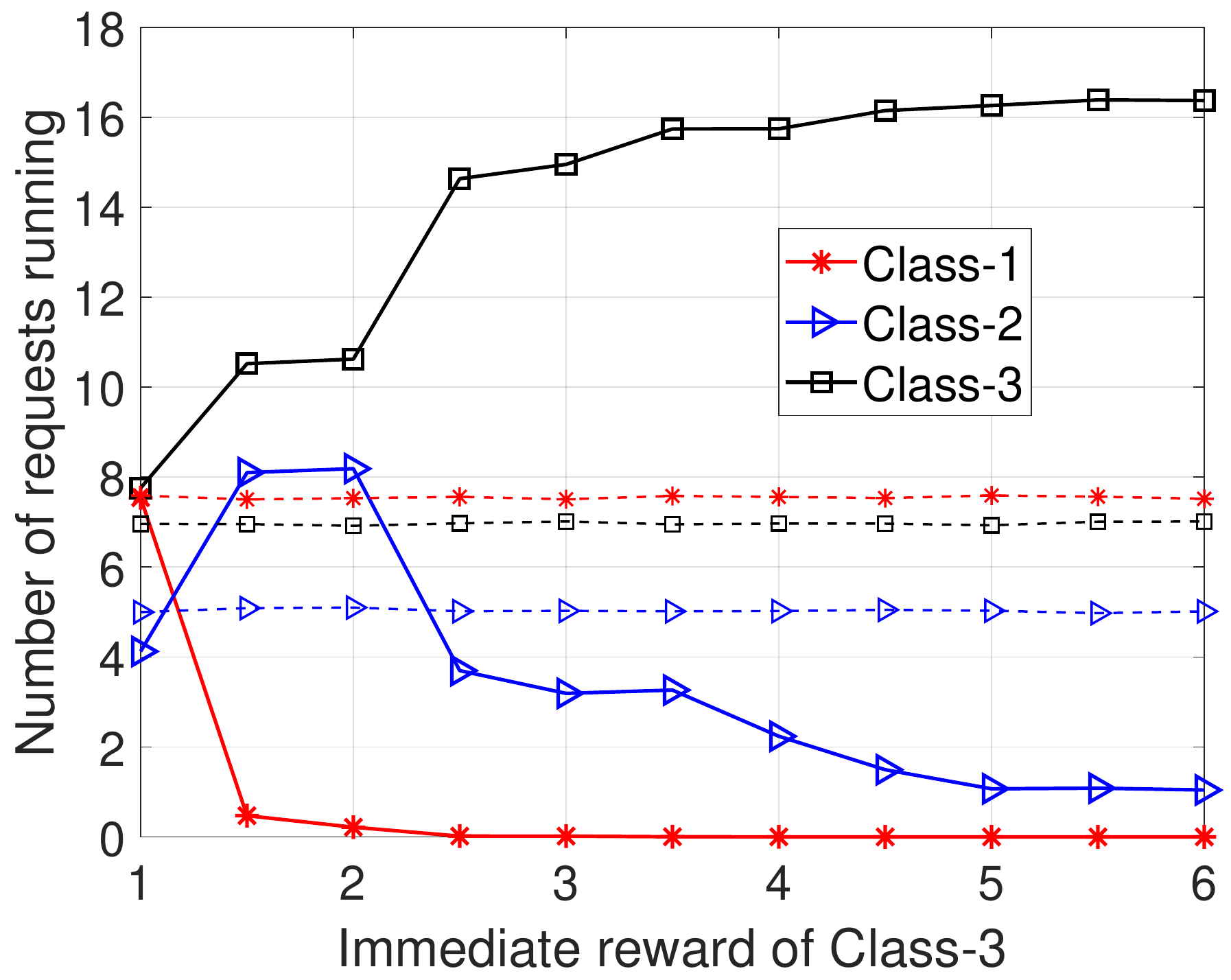}
		\caption{}
	\end{subfigure}
	\caption{The number of request running in the system of (a) Q-learning algorithm ($10^6$ iterations), (b) Q-learning algorithm ($10^7$ iterations), and (c) deep dueling algorithm (20,000 iterations) when the immediate reward of class-3 is varied. The dash lines are results of the greedy algorithm.} 
	\label{fig:vary_reward_running_20}
	\end{minipage}

	\begin{minipage}[t]{1.0\textwidth}
		\centering
		\begin{subfigure}[b]{0.3\textwidth}
			\centering
			\includegraphics[scale=0.28]{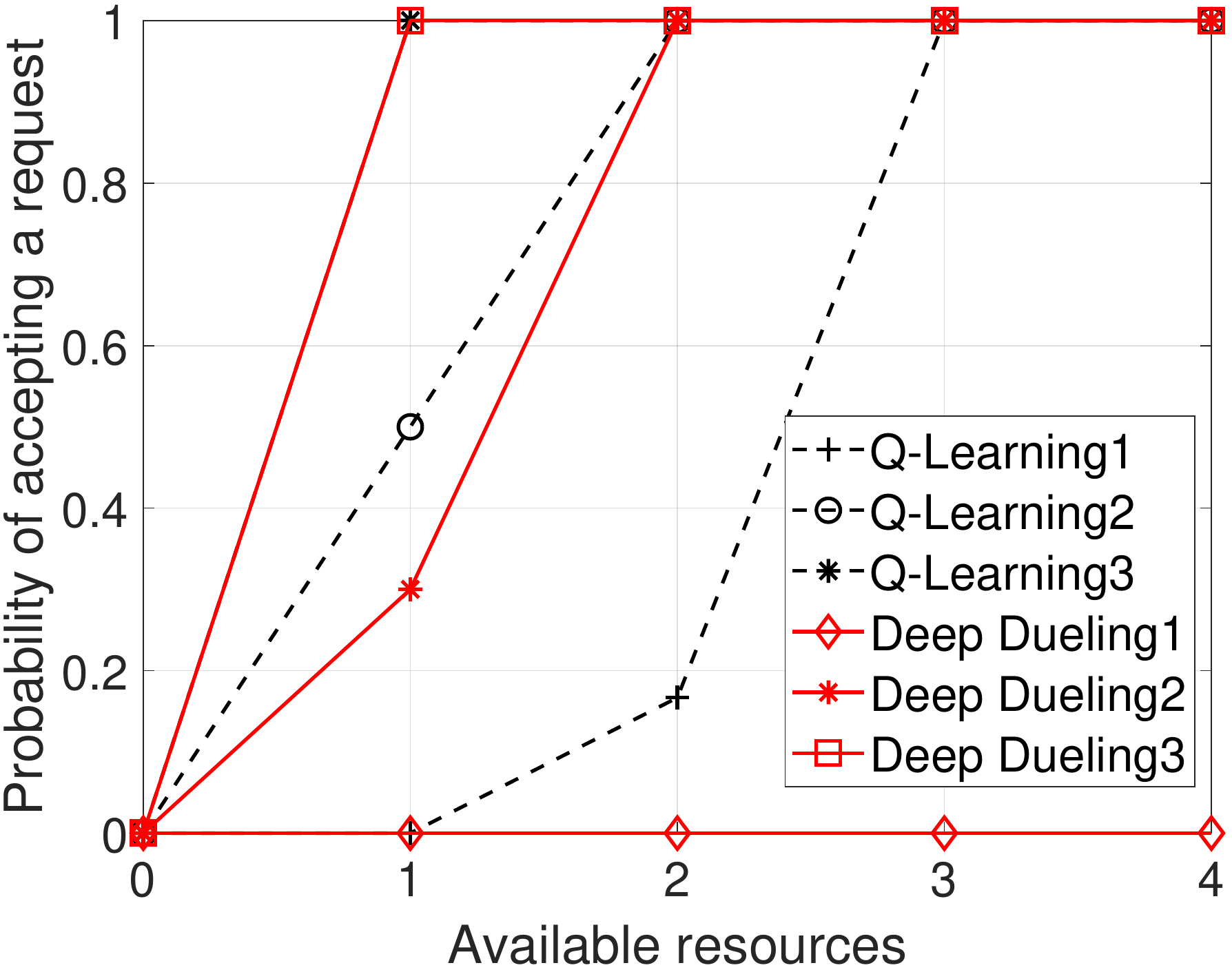}
			\caption{}
		\end{subfigure}%
		~ 
		\begin{subfigure}[b]{0.3\textwidth}
			\centering
			\includegraphics[scale=0.28]{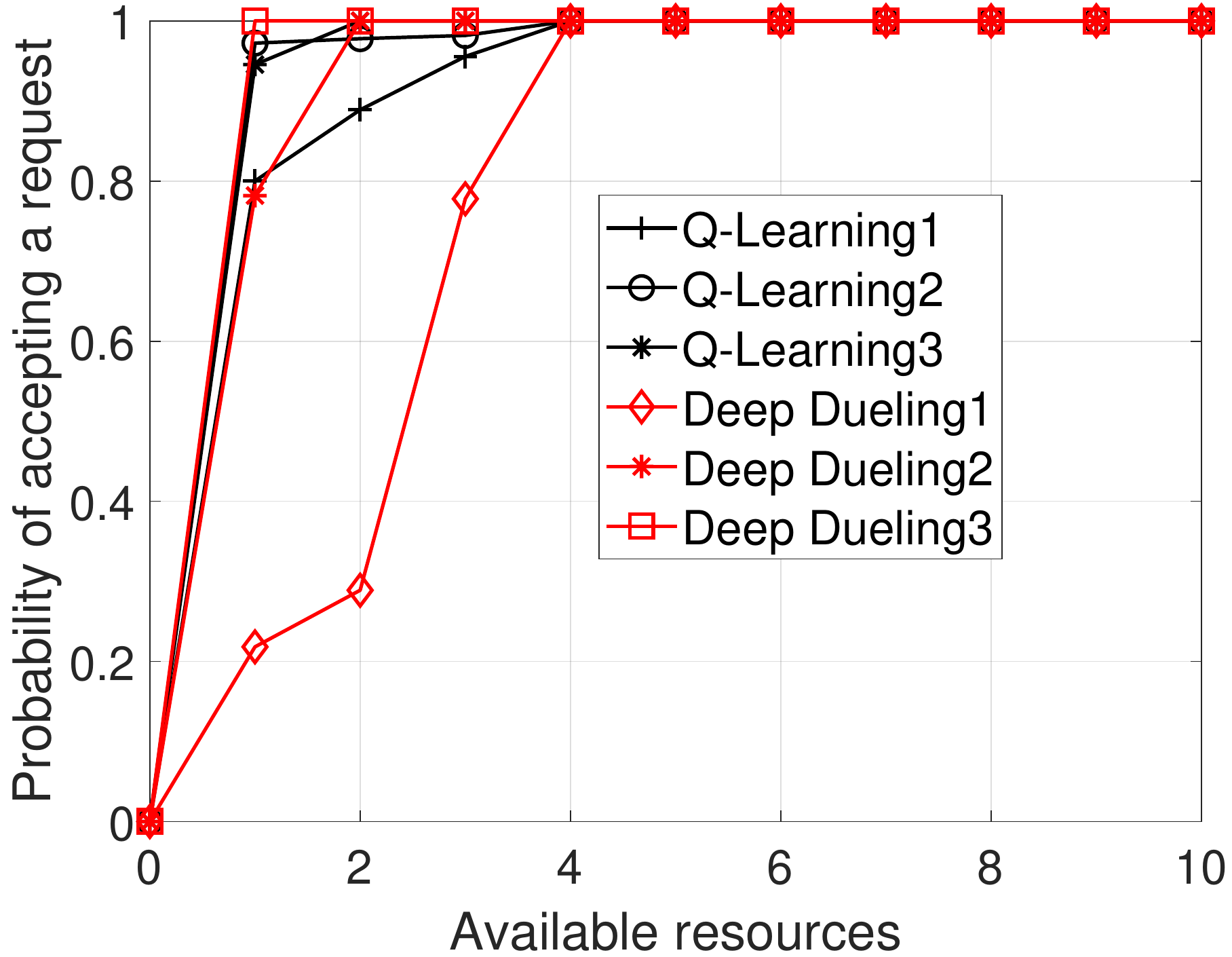}
			\caption{}
		\end{subfigure}%
		~ 
		\begin{subfigure}[b]{0.3\textwidth}
			\centering
			\includegraphics[scale=0.28]{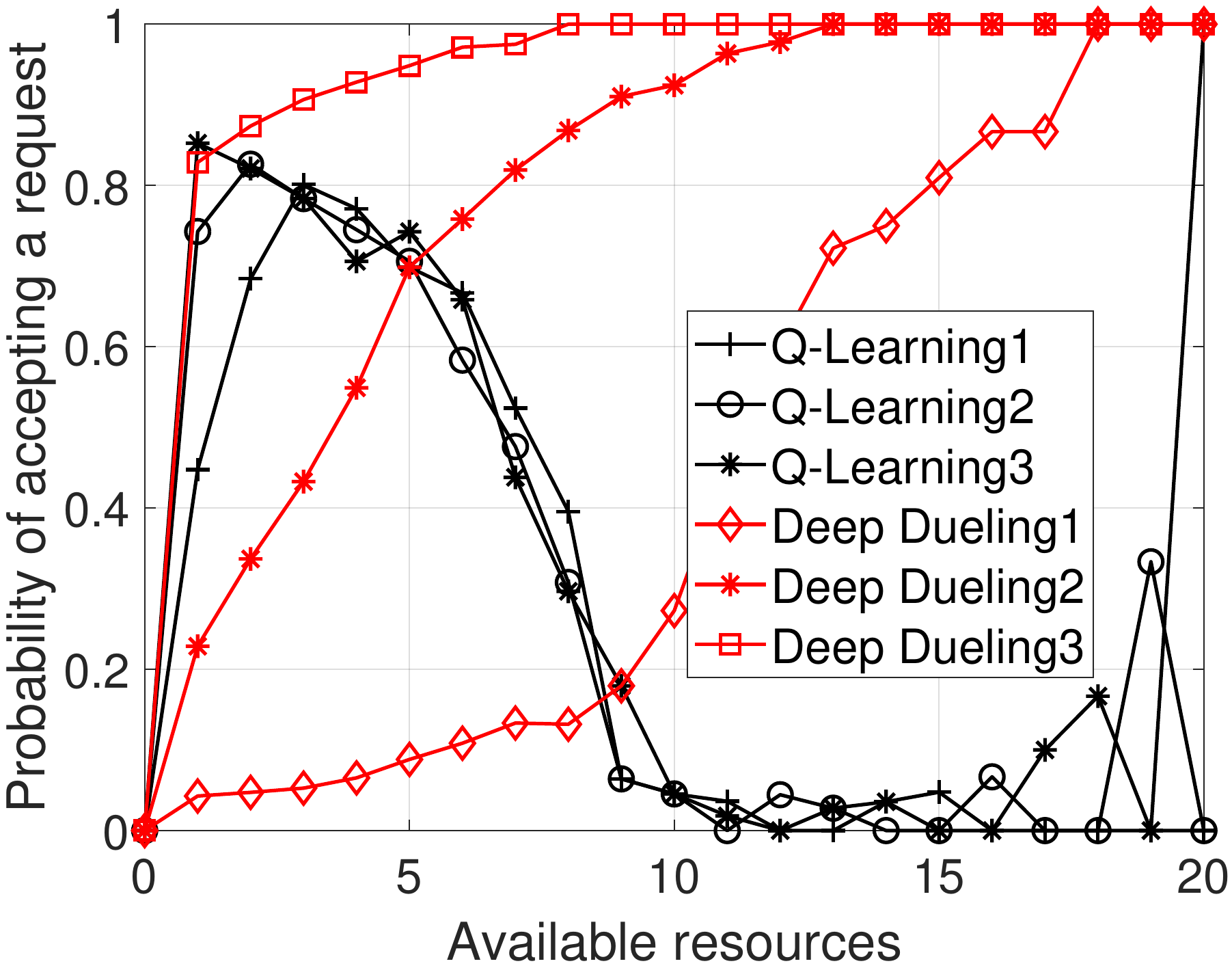}
			\caption{}
		\end{subfigure}
		\caption{The probabilities of accepting a request from classes when the maximum available resources of the system is (a) 4 times, (b) 10 times, and (c) 20 times of resources requested by a slice.} 
		\label{fig:optimal_policy}
	\end{minipage}
\end{figure*}

Similar to the case in Fig.~\ref{fig:vary_reward_running}, as shown in Fig.~\ref{fig:vary_reward_running_20}, the deep dueling and Q-learning algorithms reserve resources for slices from classes which have high immediate rewards. However, the deep dueling algorithm achieves better performance compared to the Q-learning algorithm. For example, when the immediate reward of slices from class-3 is 6, the number of requests running in the systems is about 16 requests and 11 requests for the deep dueling and the Q-learning algorithms, respectively.

In summary, in all the cases, the deep dueling algorithm always achieves the best performance in terms of the average reward and network performance.

\paragraph{Optimal Policy}
In Fig.~\ref{fig:optimal_policy}, we examine the optimal policy of the deep dueling and Q-learning algorithms. Specifically, we set the maximum resources of the system at 4 times, 10 times, and 20 times of resources requested by a slice and evaluate the policy of the algorithms with different available resources in the system as shown in Fig.~\ref{fig:optimal_policy}(a), Fig.~\ref{fig:optimal_policy}(b), and Fig.~\ref{fig:optimal_policy}(c), respectively. Note that the lines Q-learning\{1,2,3\} and Deep Dueling\{1,2,3\} represent the probabilities of accepting requests from class-\{1,2,3\} by using the Q-learning and deep dueling algorithms, respectively.

Clearly, in three cases, the deep dueling always obtains the best policy. In particular, it will reject almost requests from class-1 (lowest immediate reward) when there are few available resources in the system. When the available resources in the system increase, the probability of accepting a request from class-1 is also increased. Note that, when the maximum system resource capacity is large, i.e., 20 times of resources requested by a slice, the performance of the Q-learning is fluctuated as it cannot converge to the optimal policy event with $10^7$ iterations.

\subsubsection{Convergence of Deep Reinforcement Learning Approaches}
\begin{figure}[h]
	\centering
	\begin{subfigure}[b]{0.225\textwidth}
		\centering
		\includegraphics[scale=0.235]{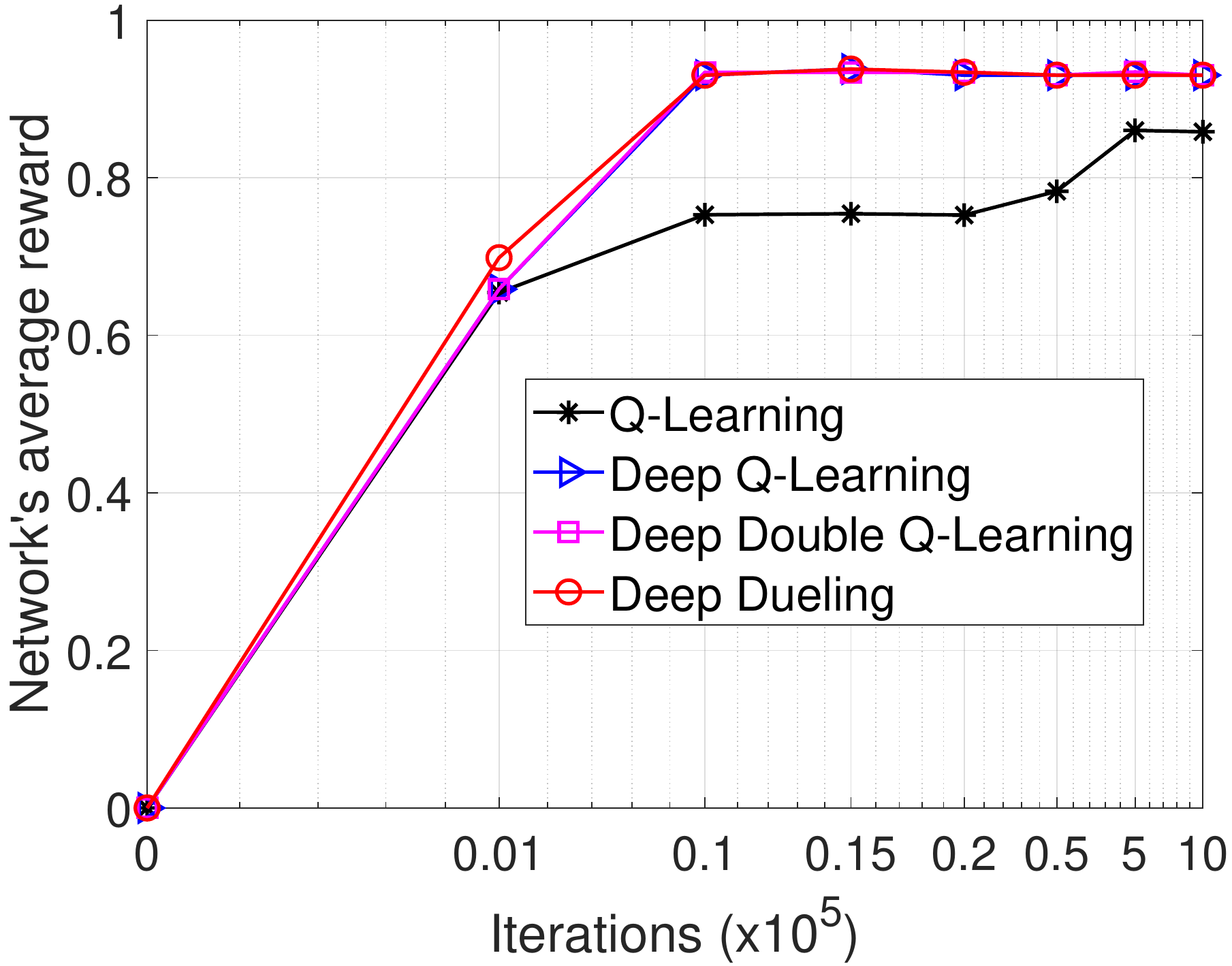}
		\caption{}
	\end{subfigure}%
	~
	\begin{subfigure}[b]{0.225\textwidth}
		\centering
		\includegraphics[scale=0.235]{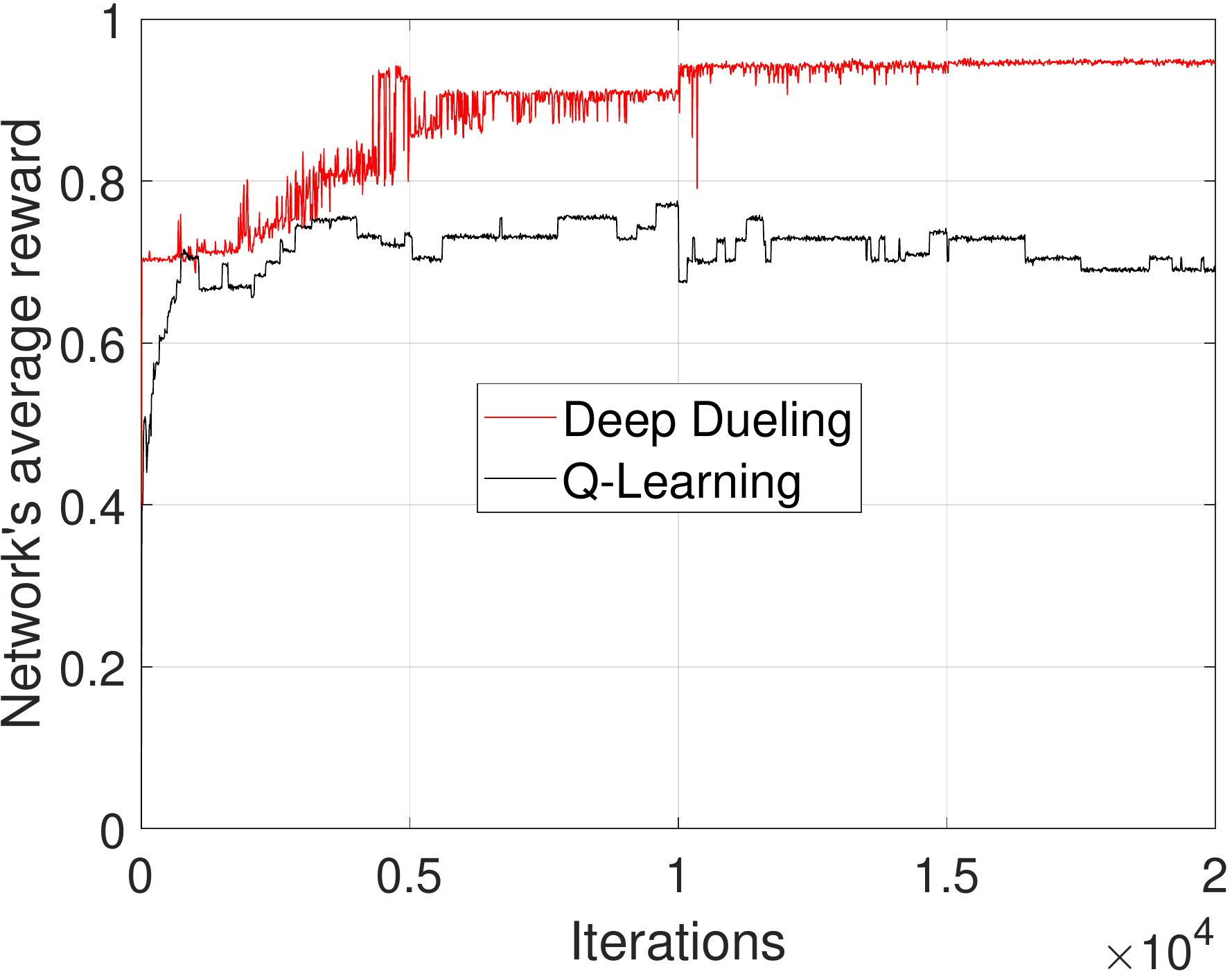}
		\caption{}
	\end{subfigure}%
	\caption{The convergence of reinforcement learning algorithms when the radio, computing, storage resources are 400 Mbps, 8 CPUs, and 4 GB, respectively with (a) $10^6$ iteration and (b) 20,000 iterations.}
	\label{fig:convergence2000}
\end{figure}

Next, we show the learning process and the convergence of the deep reinforcement learning approaches, i.e., deep Q-learning, deep double Q-learning, and deep dueling, in different scenarios. As shown in Fig.~\ref{fig:convergence2000}(a), when the maximum radio, computing, storage resources are 400 Mbps, 8 CPUs, and 4 GB, respectively, the convergence rates of the three deep Q-learning algorithms are considerably higher than that of the Q-learning algorithm. Specifically, while the deep reinforcement learning approaches converge to the optimal value within 10,000 iterations, the Q-learning need more than $10^6$ iterations to obtain the optimal policy. This is stemmed from the fact that in the system under consideration, the state space is dimensional and the system dynamically changes over time. In Fig.~\ref{fig:convergence2000}(b), we show the convergence of the Q-learning and deep dueling algorithms in the first 20,000 iterations to clearly verify this observation.  On the contrary, by implementing the neural network with fully-connected layers, the deep reinforcement algorithms can efficiently reduce the curse of dimensionality, thereby improving the convergence rate.

We continue to increase the radio, storage, computing resources to 1 Gbps, 10 GB, and 20 CPUs, respectively. The arrival rates of classes are increased by 4 times, i.e., $\lambda_1 = 48$, $\lambda_2 = 32$, and $\lambda_3 = 40$ requests/hour, while the completion rates are equal to 2 requests/hour for all classes. As shown in Fig.~\ref{fig:convergence1020}(a), the performance of the deep reinforcement algorithms is significantly higher than that of the Q-learning algorithm. It is important to note that as the state space now is more complicated than in the previous case, the deep dueling algorithm obtains the optimal policy within 15,000 iterations, while the other two deep reinforcement learning approaches require more time to converge to the optimal policy.
\begin{figure}[h]
	\centering
	\begin{subfigure}[b]{0.23\textwidth}
		\centering
		\includegraphics[scale=0.23]{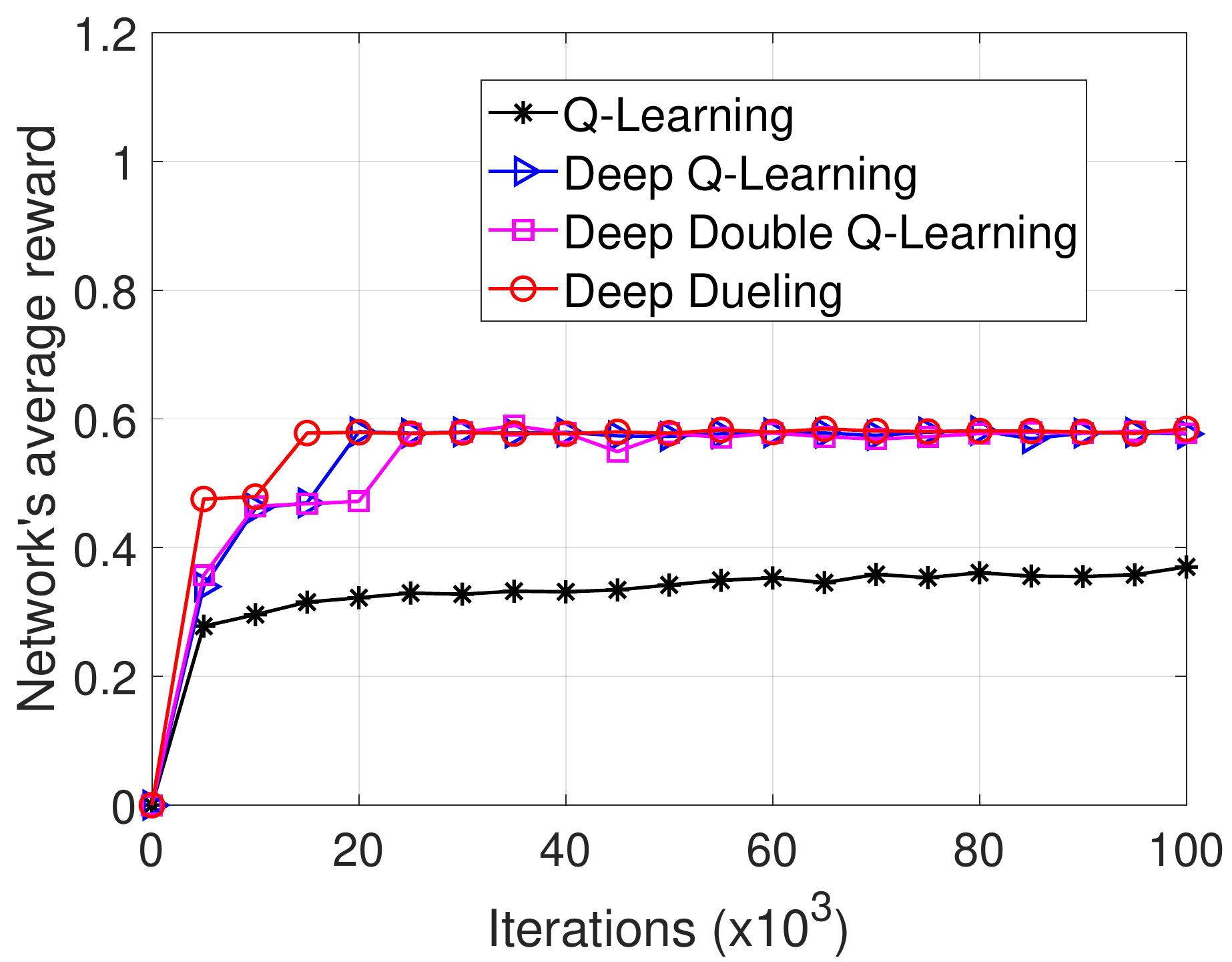}
		\caption{}
	\end{subfigure}%
	~ 
	\begin{subfigure}[b]{0.23\textwidth}
		\centering
		\includegraphics[scale=0.23]{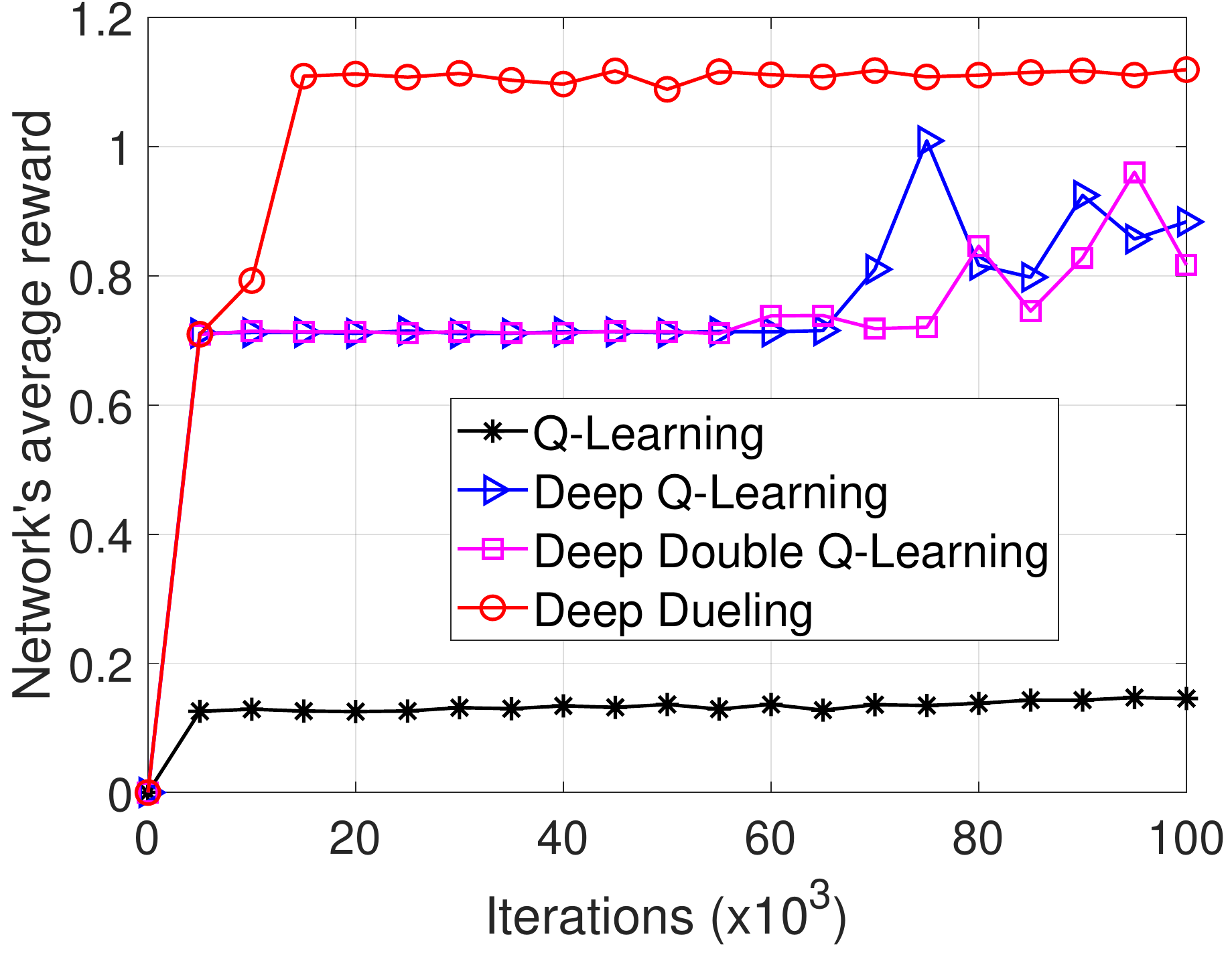}
		\caption{}
	\end{subfigure}%
	\caption{The convergence of reinforcement learning algorithms when (a) the radio, computing, storage resources are 1 Gbps, 20 CPUs, and 10 GB, respectively and (b) the radio, computing, storage resources are 2 Gbps, 20 GB, and 40 CPUs, respectively.}
	\label{fig:convergence1020}
\end{figure}

We keep increasing the radio, storage, computing resources to 2 Gbps, 20 GB, and 40 CPUs, respectively and observe the convergence rate of the deep reinforcement algorithms as shown in Fig.~\ref{fig:convergence1020}(b). Clearly, as now the system is very complicated, the deep dueling can achieve the optimal policy within 20,000 iterations while the deep Q-learning and deep double Q-learning algorithms cannot converge to the optimal policy after 100,000 iterations. This is due to the fact that by decoupling the neural network into two streams, the deep dueling algorithm can significantly reduce the overestimation of the optimizer, i.e., stochastic gradient descent. 

\begin{figure}[h]
	\centering
	\includegraphics[scale=0.33]{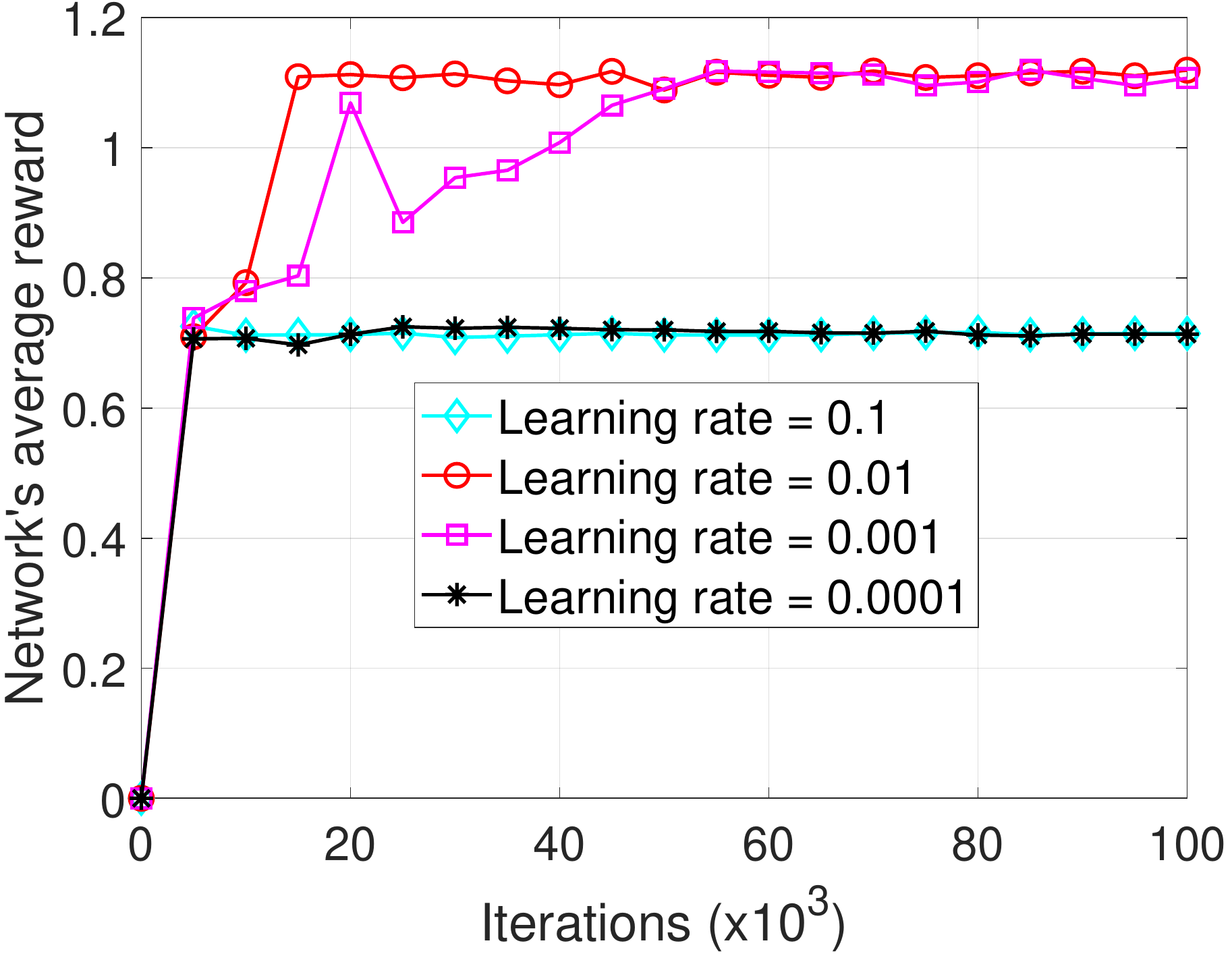}
	\caption{The performance of deep dueling algorithm with different learning rates.}
	\label{Fig.vary_learning_rate}
\end{figure}
Next, we show the effects of the learning rate on the performance of the deep dueling algorithm. The learning rate is the most critical hyper-parameters to tune for training deep neural networks. If the learning rate is too slow, the training process is more reliable but requires a long time to converge to the optimal policy. In contrast, if the learning rate is too high, the algorithm may not converge to the optimal policy or even diverge. This is stemmed from the fact that the deep dueling algorithm uses the gradient descent method. If the learning rate is too large, gradient descent may overshoot the optimal point, and thus resulting in poor performance. This observation is proved by simulation results as shown in Fig.~\ref{Fig.vary_learning_rate}. Specifically, with the learning rate is 0.01, the deep dueling algorithm achieves the best performance in terms of the average reward and the convergence rate compared to other learning rates.


\section{Conclusion}
\label{sec:conclusion}
In this paper, we have developed the optimal and fast network resource management framework which allows the network provider to jointly allocate multiple combinatorial resources (i.e., computing, storage, and radio) to different slice requests in a real-time manner. To deal with the dynamic and uncertainty of slice requests, we have adopted the semi-Markov decision process. Then, the reinforcement learning algorithms, i.e., Q-learning, deep Q-learning, deep double Q-learning, and deep dueling, have been employed to maximize the long-term average reward for the network provider. The key idea of the deep dueling is using two streams of fully connected hidden layers to concurrently train the value and advantage functions, thereby improving the training process and achieving the outstanding performance for the system. Extensive simulations have shown that the proposed framework using deep dueling can yield up to 40\% higher long-term average reward with few thousand times faster compared with those of other network slicing approaches. Future works comprise considering the connectivity resources and the existence of multiple data centers in complex network slicing models by accommodating more states to the system state space. The performance of the proposed solution will be evaluated in terms of complexity and scalability. Moreover, the convergence rate and stability of the deep dueling algorithm will be improved by using the state of the art deep neural networks.

\appendices

\section{The proof of Theorem~\ref{theorem_equivalent}}
\label{appendix:pro_equivalent}
For any $t \ge 0$, define the matrix $\mathbf{P}(t)$ by $\mathbf{P}(t) = (p_{\mathbf{s},\mathbf{s'}}(t)), \forall \mathbf{s}, \mathbf{s'} \in \mathcal{S}$. Denote by $\mathbf{Q}$, the matrix $\mathbf{Q}=q_{\mathbf{s},\mathbf{s'}}, \forall \mathbf{s}, \mathbf{s'} \in \mathcal{S}$, where the diagonal elements $q_{\mathbf{s},\mathbf{s}}$ are defined by:
\begin{equation}
q_{\mathbf{s},\mathbf{s}} = -z_{\mathbf{s}}.
\end{equation}

After that Kolmogoroff's forward differential equations can be written as $\mathbf{P}'(t)= \mathbf{P}(t)\mathbf{Q}$ for any $t\ge 0$. Hence, the solution of this system of differential equations is given by:
\begin{equation}
\mathbf{P}(t) = e^{t\mathbf{Q}} = \sum_{n=0}^{\infty}\frac{t^n}{n!}\mathbf{Q}^n, t \geq 0.
\end{equation}
The matrix $\overline{\mathbf{P}} = \overline{p}_{\mathbf{s}, \mathbf{s'}}, \forall \mathbf{s}, \mathbf{s'} \in \mathcal{S}$ can be reformulated as $\overline{\mathbf{P}} = \mathbf{Q}/z + \mathbf{I}$, where $\mathbf{I}$ is the identity matrix. Therefor, we have
\begin{equation}
\begin{aligned}
\mathbf{P}(t) = e^{t\mathbf{Q}} = e^{zt(\overline{\mathbf{P}}- \mathbf{I})} &= e^{zt\overline{\mathbf{P}}}e^{-zt\mathbf{I}}=e^{-zt}e^{zt\overline{\mathbf{P}}}\\
&= \sum_{n=0}^{\infty}e^{-zt}\frac{{(zt)}^n}{n!}\overline{\mathbf{P}}^n.
\end{aligned}
\end{equation}
Based on conditioning on the number of Poisson events up to time $t$ in the $\{\overline{X}(t)\}$ process, we have
\begin{equation}
P\{\overline{X}(t)=\mathbf{s'}|\overline{X}(0)=\mathbf{s}\}=\sum_{n=0}^{\infty}e^{-zt}\frac{{(zt)}^n}{n!}\overline{p}_{\mathbf{s},\mathbf{s'}}^{(n)},
\end{equation}
where $\overline{p}_{\mathbf{s},\mathbf{s'}}^{(n)}$ is the $n$-step transition probability of the discrete-time Markov chain $\overline{X}_n$. By recalling the Corollary~\ref{cor}, the proof is completed.
\section{The proof of Lemma~\ref{Lem:exist}}
\label{app:reward}
Let $\{A_n: n \geq 0\}$ be a sequence of matrices. We have $\lim\limits_{n \rightarrow \infty} A_n=A$ if $\lim\limits_{n \rightarrow \infty} A_n(\mathbf{s'}|\mathbf{s})=(\mathbf{s'}|\mathbf{s})$ for each $(\mathbf{s},\mathbf{s'}) \in \mathcal{S} \times \mathcal{S}$. We now consider the Cesaro limit which is defined as follows. We say that $A$ is the Cesaro limit (of order one) of $\{A_n: n \geq 0\}$ if
\begin{equation}
\lim\limits_{n \rightarrow \infty}\frac{1}{N}\sum_{n=0}^{N-1}A_n=A,
\end{equation}
and write
\begin{equation}
	C-\lim_{N \rightarrow \infty}A_N=A
\end{equation}
to distinguish this as a Cesaro limit.
We then define the limiting matrix $\overline{P}$ by
\begin{equation}
\overline{P}=C-\lim_{N \rightarrow \infty}P^N.
\end{equation}
In component notation, where $\overline{p}(\mathbf{s'}|\mathbf{s})$ denotes the $(\mathbf{s'}|\mathbf{s})$-th element of $\overline{P}$, this means that, for each $\mathbf{s}$ and $\mathbf{s'}$, we have
\begin{equation}
\overline{p}(\mathbf{s'}|\mathbf{s}) = \lim_{N \rightarrow \infty}\frac{1}{N}\sum_{n=1}^{N}p^{n-1}(\mathbf{s'}|\mathbf{s}),
\end{equation}
where $p^{n-1}$ denotes a component of $P^{n-1}$ and $p^0(s'|s)$ is a component of an $\mathcal{S}\times\mathcal{S}$ identity matrix. As $P$ is aperiodic, $\lim_{N \rightarrow \infty}$ exists and equals to $\overline{P}$.


\end{document}